\newcommand{\hr}{{\mathcal H}}
\newcommand{\cS}{{\mathcal S}}
\newcommand{\nn}{{\mathbb N}}
\newcommand{\bX}{\mathbf X}
\newcommand{\bY}{\mathbf Y}
\newcommand{\eins}{{\mathbbm{1}}}
\newtheorem{theorem}{Theorem}
\newtheorem{conjecture}{Conjecture}
\newtheorem{definition}{Definition}
\newtheorem{lemma}{Lemma}
\newtheorem{remark}{Remark}
\newcommand{\tr}{\mathrm{tr}}
\newcommand{\supp}{\mathrm{supp}}
\DeclareMathOperator{\spec}{spec}
\DeclareMathOperator{\pinch}{pinch}
\DeclareMathOperator{\sgn}{sgn}
\DeclareMathOperator{\linspan}{span}
\title{A class of permutation-invariant measurements and their relation to quantum relative entropies}
\author[1,2]{Janis N\"otzel}
\affil[1]{F\'{\i}sica Te\`{o}rica: Informaci\'{o} i Fen\`{o}mens Qu\`{a}ntics,\authorcr
Universitat Aut\`{o}noma de Barcelona,\authorcr
ES-08193 Bellaterra (Barcelona), Spain \authorcr
janis.notzel@uab.cat
\ \authorcr\ \authorcr\ }
\affil[2]{Lehrstuhl f\"{u}r Theoretische Informationstechnik,\authorcr
Technische Universit\"{a}t M\"{u}nchen,\authorcr 80333 Munich, Germany}
\begin{document}
\maketitle
\begin{abstract}
We characterize the asymptotic performance of a class of positive operator valued measurements (POVMs) where the only task is to make measurements on independent and identically distributed quantum states on finite-dimensional systems. The POVMs we utilize here can be efficiently described in terms of a reasonably small set of parameters. Their analysis furthers the development of a quantum method of types. They deliver provably optimal performance in asymmetric hypothesis testing and in the transmission of classical messages over quantum channels.\\
We now relate them to the recently developed $\alpha-z$ divergences $D_{\alpha,z}$ by giving an operational interpretation for the limiting case $\lim_{\alpha\to1}D_{\alpha,1-\alpha}$ in terms of probabilities for certain measurement outcomes. This explains one of the more surprising findings of \cite{audenaert-datta} in terms of the theory of group representations. In addition, we provide a Cauchy-Binet type formula for unitary matrices which connects the underlying representation theoretic objects to partial sums of the entries of unitary matrices.\\
At last, we concentrate on the special case of qubits. We are able to give a complete description of the asymptotic detection probabilities for all POVM elements described here. We take the opportunity to define a family of functions on pairs of semi-definite matrices which obeys the quantum generalizations of R\'enyi's axioms except from the generalized mean value axiom. This family is described by limiting values of $\alpha-z$ divergences for the extremal values of the parameter.
\end{abstract}
\begin{section}{Introduction}
The importance of representation theory for quantum information is understood best by taking a quick look at the structure of communication systems: Throughout, these systems employ certain \emph{structures} which are inserted into them by construction at sender's side and can then be detected by the receiver even if the signal gets corrupted by noise. The most general approach for making a signal received despite noise is to use some form of \emph{repetition}. Signals arising from repetition are obviously invariant under permutations. In a probabilistic sense, this property may continue to hold even under the influence of noise. Subsequently, the early and simple intuition of making communication resilient against noise by repetition has been developed to what is modern communication theory. The simple invariance under permutations that repetition delivers got lost in the hunt for higher throughput but the importance of group actions in code design prevailed.\\
Of course this argument is independent of the mathematical model we use for our signals and so an analogous reasoning applies to the study of signals which are described by quantum theory.\\
We concentrate on the study of finite dimensional quantum systems here. Measurements on such systems are completely described by positive operator valued measurements (POVMs). Such measurements have a finite number of different possible measurement outcomes. The probability that a POVM yields a specific outcome depends on the state that the system is in. In order to make communication tasks viable it is of great importance to deliver both sets $\{\rho_m\}_{m\in\mathbf M}$ of \emph{signals} which embody the messages $m\in\mathbf M$ into the quantum states $\rho_m$ and \emph{detection schemes} $\{D_m\}_{m\in\mathbf M}$ in the sense of POVMs such that the probability $\mathbf p_m$ of getting measurement outcome $m$ when the signal state is $m$ satisfies $\mathbf p_m\approx1$.\\
Besides these requirements it is necessary to deliver efficient descriptions of both the signals and the measurements.\\
We take this as an easy to grasp motivation to study the asymptotic performance of certain types of POVMs which are built up from certain well-described representations. This provides a way to obtain a clear specification of a class of quantum measurements in terms of a reasonably small (as compared to the frequency typical subsets of classical information theory. Our description needs less than twice the number of parameters needed for the description of a frequency typical subset with the underlying alphabet being $[d]$) set of parameters, an approach which opens up the possibility to deliver standardized detection procedures for quantum communication. Our main contribution in this direction is the characterization of the asymptotic detection probabilities of the POVMs studied here in terms of convex optimization problems which require optimization only on a finite number of copies of the underlying systems.\\
The POVM elements that we analyse here are capable of delivering provably optimal results both in asymmetric hypothesis testing \cite{noetzel-hypothesis} and in message transmission over noisy channels \cite{cbn}. They also deliver an intuitively appealing step forward in the development of a ``quantum method of types''. In this work we are further able to show initial connections to the geometry of positive matrices and to prove a surprising connection to the recently developed $\alpha-z$ divergences \cite{audenaert-datta}.\\
In the work \cite{noetzel-hypothesis} the author defined this specific class of POVMs as follows: For a given orthonormal basis $\{e_i\}_{i=1}^d$ of $\mathbb C^d$ and a ``frequency'' or ``type'' (a nonnegative function $f:\{1,\ldots,d\}\to\nn$ satisfying $\sum_if(i)=n$), consider the irreducible representations of the symmetric group $S_n$ on the frequency typical subspaces
\begin{align*}V_f:=\linspan(\{e_{i_1}\otimes\ldots\otimes e_{i_n}:|\{k:i_k=i\}|=f(i)\ \mathrm{for\ all\ }i\}.\end{align*}
Since these subspaces are invariant under permutations by definition, they naturally split up into different isotypical subrepresentations $V_{f,\lambda}$ (where $\lambda$ denote Young frames and some $V_{f,\lambda}$ may not contribute to above decomposition, meaning that $V_{f,\lambda}=\{0\}$ for some pairs $(f,\lambda)$, while some representations may occur several times)
\begin{align}V_f=\bigoplus_\lambda V_{f,\lambda}.\end{align}
The reason that the projections $P_{f,\lambda}$ onto these subspaces deliver optimal results in asymmetric hypothesis testing stems from the following observation: Given a state $\sigma$ with eigenvalues $t_1\geq t_2\geq\ldots\geq t_d>0$, we may pick one of its eigenbases for the definition of the $V_f$. It is then straightforward to show that the estimate
\begin{align}\tr\{P_{f,\lambda}\sigma^{\otimes n}\}&\approx\dim(F_\lambda)\cdot2^{-n\sum_i\tfrac{1}{n}f(i)\log t_i}\\
&\approx2^{n(H(\tfrac{1}{n}\lambda)-\sum_i\tfrac{1}{n}f(i)\log t_i)}
\end{align}
is valid, where $\dim(F_\lambda)$ is the dimension of the irreducible representation $F_\lambda$ of $S_n$ corresponding to $\lambda$ and $H(\tfrac{1}{n}\lambda)=-\sum_i\tfrac{1}{n}\lambda_i\log(\tfrac{1}{n}\lambda_i)$ is the entropy of the normalized Young frame. It turns out that, for $\lambda\approx n\cdot\mathrm{spec}(\rho)$ and $f(i)\approx n\cdot\langle e_i,\rho e_i\rangle\ \forall\ i\in[d]$ for some arbitrary second state $\rho$ we get
\begin{align}\tr\{P_{f,\lambda}\sigma^{\otimes n}\}&\approx2^{-nD(\rho\|\sigma)}.\end{align}
In our earlier work \cite{noetzel-hypothesis} we have not been able to characterize the exact asymptotic behaviour of the maps $(f,\lambda)\mapsto\tr\{P_{f,\lambda}\sigma^{\otimes n}\}$ when the basis $\{e_1,\ldots,e_d\}$ is no longer such that the representation of $\sigma$ in that basis is diagonal. We now start investigating this topic: For a fixed basis, two probability distributions $p$ and $s$ on $\{1,\ldots,d\}$ for which $s(1)\geq\ldots\geq s(d)$ we are interested in the map
\begin{align}
(\sigma,p,s)\mapsto-\lim_{n\to\infty}\frac{1}{n}\log\tr\{P_{f^{(n)},\lambda^{(n)}}\sigma^{\otimes n}\}
\end{align}
when $\tfrac{1}{n}f^{(n)}\to p$ and $\tfrac{1}{n}\lambda^{(n)}\to s$. The above limiting procedure is unfortunately not always easy to characterize. We do therefore not give a characterization in all detail and rather concentrate on the cases where $f^{(n)}=\lambda^{(n)}$ for all $n\in\nn$ or where $\sigma$ is diagonal in the chosen basis, whenever $d>2$. Only in the case $d=2$ we are able to exploit the specific nature of qubit systems to deliver a more detailed description.\\
Since every state $\sigma$ can be transformed into a state $U\sigma U^\dag$ which is diagonal in $\{e_1,\ldots,e_d\}$ we get a representation of $\sigma$ as
\begin{align}
\sigma^{\otimes n}=\sum_{f,\lambda}\frac{\tr\{(U \sigma U^\dag)^{\otimes n}P_f\}}{|T_f|}\cdot U^{\dag\otimes n} P_{f,\lambda} U^{\otimes n},
\end{align}
even if $\sigma$ is not diagonal in $\{e_1,\ldots,e_d\}$. Thus we felt motivated to also study the asymptotic behaviour of Hilbert-Schmidt scalar products of $P_{f,\lambda}$ with $A^{\otimes n}P_{f',\lambda}A^{\dag\otimes n}$ for arbitrary operators $A$. Especially in the case where $A=U$ is a unitary this led us to prove an interesting algebraic formula (see Lemma \ref{lemma:cauchy-binet-type-formula}) which arises from the study of the minimum of the function
\begin{align}
(U,p,q,s)\mapsto-\lim_{n\to\infty}\frac{1}{n}\log\tr\{P_{f^{(n)},\lambda^{(n)}}U^{\otimes n}P_{g^{(n)},\lambda^{(n)}}U^{\dag \otimes n}\}
\end{align}
where $\tfrac{1}{n}f^{(n)}\to p$, $\tfrac{1}{n}g^{(n)}\to q$ and $\tfrac{1}{n}\lambda^{(n)}\to s$. A detailed description of our approach is postponed to Section \ref{sec:main-results-and-definitions}, where we also give precise definitions of our main objects.\\
The method we utilize here opens up the possibility to split the analysis of the detection procedure into two parts: We note that $[P_f,P_\lambda]=0$ for all $f$ and $\lambda$. Thus, one can always realize e.g. the $P_f$ measurement first. This task is comparable to detection procedures in classical systems, if one takes local measurements (e.g. measurements where each POVM element is of the form $\otimes_{i=1}^nM_i$ for $M_1,\ldots,M_n\geq0$ and $M_1,\ldots,M_n\in M_d$ for free. It is not as clear how to implement the $P_\lambda$ measurements. A method for doing so has been outlined in \cite{harrow-thesis}, with its success being conditioned on the physical realizability of what is called the Schur-transform there (and in \cite{bacon-chuang-harrow}). Apart from \cite{harrow-thesis}, the work \cite{christandl-thesis} gives a lot of structural insights into the relations between representation theory and quantum information theory.\\
Recent work has also put into focus the definition of quantum relative entropies, and large families of such quantities have been defined and key properties like unitary invariance, convexity or monotonicity have been proven to hold.\\
We do not make any attempt to give a complete overview on the topic, we rather point the reader to the papers \cite{audenaert-datta,simon,hayashi-tomamichel}. These contain a good amount of the necessary history as well. Our focus here will be on the notion of $\alpha-z$ relative entropies $D_{\alpha,z}$ which are defined for real parameters $\alpha\neq1$ and $z\neq0$ in \cite{audenaert-datta}. It was proven by the authors of that work that their definition includes all the previous ones in the sense that certain choices of the parameters $\alpha$ and $z$ yield the other relative entropies.\\
We will pay special attention to the limits $\lim_{\alpha\to1}D_{\alpha,1-\alpha}$ and $\lim_{\alpha\to1}D_{\alpha,z}$ for $z\neq0$ which yield a rather cumbersome formula in the first case and the quantum relative entropy in the second. We are able here to give a direct operational interpretation of the first quantity in terms of an asymptotic probability (for details, see Section \ref{sec:main-results-and-definitions}) of obtaining certain measurement outcomes when the POVM which is utilized is of the form $\{P_{f,\lambda}\}_{f,\lambda}$ for some choice of basis. The second quantity is already connected via \cite{noetzel-hypothesis} and \cite{simon}. This motivates our definition of a huge set of non-negative functions $R_\cdot$ which fulfill some of the R\'enyi axioms (but not the generalized mean value axiom) and can be parameterized in such a way that they naturally include both $\lim_{\alpha\to1}D_{\alpha,1-\alpha}$ and $\lim_{\alpha\to1}D_{\alpha,z}$ for $z\neq0$. We point out a possible way to derive further connections to the geometry of positive matrices in Section \ref{sec:proof-that-axioms-are-fulfilled}.\\
A further study of this interplay between representation theory, information theory and the geometry of positive matrices is postponed to future work.
\end{section}
\begin{section}{Notation\label{sec:definitions}}
Throughout, $d\in\nn$ denotes the dimension of the vector space $\mathbb C^d$ that we will be working on. We use the convention $[d]:=\{1,\ldots,d\}$. For two natural numbers $d$ and $k$ satisfying $k\leq d$ and every $j\in[d]$ we set $[d]^k_j:=\{(i_1,\ldots,i_{k}):\exists!\ l\in[d]:\ i_l=j\}$. The set of probability distributions on $[d]$ is $\mathfrak P([d])$, and the set of ordered elements of $\mathfrak P([d])$ is $\mathfrak P^\downarrow([d]):=\{p\in\mathfrak P([d]):p(1)\geq\ldots\geq p(d)\}$. For an arbitrary $p\in\mathfrak P([d])$, $p^\downarrow\in\mathfrak P^\downarrow([d])$ is defined to have the same values as $p$, but in descending order. A function $g:[d]\to \mathbb R$ satisfies $g\succeq f$ for another function $f:[d]\to\mathbb R$ if $\sum_{i=1}^kg(i)\geq\sum_{i=1}^kf(i)$ for all $k=1,\ldots,d$.\\
The set of positive matrices acting on $\mathbb C^d$ is $\mathcal P_d$, the set of matrices is $M_d$. Non-square matrices are elements of $M_{d\times d'}$, where $d$ is the number of rows and $d'$ the number of columns. The adjoint of $b\in M_{d}$ written $b^\dag$.\\
$\cS(\mathbb C^d)$ is the set of states, i.e. positive semi-definite matrices with trace (the trace function on $M_d$ is denoted as $\tr$) $1$ acting on the Hilbert space $\mathbb C^d$. Pure states are given by projections onto one-dimensional subspaces. A vector $x\in\mathbb C^d$ of length one spanning such a subspace will therefore be referred to as a state vector, the corresponding state will be written
$|x\rangle\langle x|$. For a finite set $\mathbf X$, $|\mathbf X|$ denotes its cardinality. If $\mathbf X'\subset X$, then $\bX\backslash\bX':=\{x\in\bX:x\notin\bX'\}$. The extremal points of the convex set $\mathfrak P(\bX)$ are the distributions $\delta_x$ defined by $\delta_x(x')=1$ if and only if $x=x'$. We will also need ``channels'', meaning probability preserving linear maps from $\mathfrak P(\mathbf X)$ to $\mathfrak P(\bY)$. These are represented by matrices $W=(w(y|x))_{x\in\bX,y\in\bY}$ which satisfy $\sum_{y\in\bY}w(y|x)=1$ for all $x\in\bX$. Their action is uniquely defined by setting $W(\delta_x):=\sum_{y\in\bY}w(y|x)\delta_y$ for every $x\in\bX$. The set of channels from $\bX$ to $\bY$ is denoted $C(\mathbf X,\mathbf Y)$.<\\
For any $n\in\nn$, we define $\bX^n:=\{(x_1,\ldots,x_n):x_i\in\bX\ \forall i\in\{1,\ldots,n\}\}$, we also write $x^n$ for the elements of $\bX^n$. Given such element, $N(\cdot|x^n)$ denotes its type, and is defined through $\forall x\in\bX:\ N(x|x^n):=|\{i:x_i=x\}|$. The set of all types arising from words of length $n$ is written $\mathbbm T_n$ or, if the alphabet is not clear from the context, $\mathbbm T_n(\bX)$.\\
The von Neumann entropy of a state $\rho\in\mathcal{S}(\hr)$ is given by
\begin{equation}S(\rho):=-\textrm{tr}(\rho \log\rho),\end{equation}
where $\log(\cdot)$ denotes the base two logarithm which is used throughout the paper. The  entropy of $r\in\mathfrak P(\bX)$ is defined by the formula
\begin{align}
H(r):=-\sum_{x\in \bX}r(x)\log(r(x)).
\end{align}
Given two states $\rho,\sigma\in\cS(\mathbb C^d)$, the relative entropy of them is defined as
\begin{align}\label{eqn:def-of-relative-entropy}
D(\rho\|\sigma):=\left\{\begin{array}{l l}\tr\{\rho(\log(\rho)-\log(\sigma))\},&\mathrm{if}\ \supp(\rho)\subset\supp(\sigma),\\\infty,&\mathrm{else} \end{array}\right.
\end{align}
For $p,q\in\mathfrak P([d])$ we may set $\rho:=\sum_{i=1}^dp(i)|e_i\rangle\langle e_i|$ and $\sigma:=\sum_{i=1}^dq(i)|e_i\rangle\langle e_i|$, then (with a slight abuse of notation ) $D(p\|q):=D(\rho\|\sigma)$ defines the classical Kullback-Leibler distance $D(p\|q)$ between probability distributions $p,q\in\mathfrak P([d])$ as well.\\
We now fix our notation for representation theoretic objects and state some basic facts.\\
The symbols $\lambda,\mu$ will be used to denote Young frames. The set of Young frames with at most $d\in\nn$ rows and $n\in\nn$ boxes is denoted $\mathbb Y_{d,n}$.\\
For any given $n$, the representation of $S_n$ we will consider is the standard representation on $(\mathbb C^d)^{\otimes n}$ that acts by permuting tensor factors.\\
The most important technical definition for this work is that of frequency-typical subspaces $V_f$ of $(\mathbb C^d)^{\otimes n}$. These arise from choosing a fixed orthonormal basis $\{e_i\}_{i=1}^d$ of $\mathbb C^d$, choosing a frequency $f$ (a function $f:[d]\to\mathbb N$ satisfying $\sum_{i=1}^df(i)=n$), setting $T_f:=\{(i_1,\ldots,i_n):|\{i_k:i_k=j\}|=f(j)\ \forall j\in[d]\}$, and defining
\begin{align}
V_f:=\linspan(\{e_{i_1}\otimes\ldots\otimes e_{i_n}:(i_1,\ldots,i_n)\in T_f\}).
\end{align}
They have been widely used in quantum information theory, but share one very nice property that has not been explicitly exploited in quantum information theory until \cite{noetzel-hypothesis}: They are invariant under permutations, if the (linear) action $\mathbb B$ of $S_n$ on $(\mathbb C^d)^{\otimes n}$ is defined in the natural way via
\begin{align}
\mathbb B(\tau)v_1\otimes\ldots\otimes v_n:=v_{\tau^{-1}(1)}\otimes\ldots\otimes v_{\tau^{-1}(n)}
\end{align}
for all $\tau\in S_n$ and $v_1,\ldots,v_n\in\mathbb C^d$. From the invariance of each $V_f$ under the action $\mathbb B$ of $S_n$ it immediately follows that
\begin{align}
V_f=\bigoplus_\lambda V_{f,\lambda},
\end{align}
where each $V_{f,\lambda}$ is just a direct sum of irreducible representations corresponding to $\lambda$ that is contained entirely within $V_f$. The multiplicity of $F_\lambda$ within $V_f$ is given by $\dim(V_{f,\lambda})/\dim(F_\lambda)$. It is a number which scales at most polynomially in $n$, if $d$ is kept fixed. The quantity $F_\lambda$ denotes the unique complex vector space carrying the irreducible representation of $S_n$ corresponding to a Young frame $\lambda$. Each such $\lambda$ consists of $n\in\nn$ boxes and has row lengths $\lambda_1,\ldots,\lambda_d$ for some $d\in\nn$. Thus, $\bar\lambda$ defined by $\bar\lambda(i):=\tfrac{1}{n}\lambda_i$ for every $i\in[d]$ defines an element of $\mathfrak P^\downarrow([d])$.
\\\\
During our analysis it turns out that, for every $k\leq d$, the vectors
\begin{align}
v_k:=\frac{1}{\sqrt{k}}\sum_{\tau\in S_k}\sgn(\tau)\mathbb B(\tau)e_1\otimes\ldots\otimes e_k
\end{align}
are important. Also, we are going to employ the following estimate taken from \cite[Lemma 2.3]{csiszar-koerner}, which is valid for all frequencies $f:[d]\to\nn$ that satisfy $\sum_{i=1}^df(i)=n$:
\begin{equation}
\frac{1}{(n+1)^d}2^{nH(\overline{f})}\leq|T_f|\leq2^{nH(\overline{f})}\label{eqn2},
\end{equation}
where $\bar f:=\tfrac{1}{n}f$. We will also need \cite[Lemma 2.7]{csiszar-koerner} which employs the variational distance that we define as $\|p-q\|:=\sum_{x\in\bX}|p(x)-q(x)|$ for all $p,q\in\mathfrak P(\bX)$ and delivers:
\begin{lemma}\label{lemma1}
If, for $\mathbf X$ a finite alphabet and $p,q\in\mathfrak P(\mathbf X)$ we have $\|p-q\|\leq\Theta\leq1/2$, then
\begin{equation}
 |H(p)-H(q)|\leq-\Theta\log\frac{\Theta}{|\mathbf X|}.
\end{equation}
\end{lemma}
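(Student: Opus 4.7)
The plan is to reduce the statement to a pointwise modulus-of-continuity estimate on the single-summand function $\eta(t):=-t\log t$ and then extract the global bound from $H(r)\leq\log|\mathbf X|$ applied to a renormalization of the differences. Writing $\theta_x:=|p(x)-q(x)|$, one has $H(p)-H(q)=\sum_x(\eta(p(x))-\eta(q(x)))$, so the triangle inequality gives $|H(p)-H(q)|\leq\sum_x|\eta(p(x))-\eta(q(x))|$.

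The first step, and the main obstacle, is to prove the pointwise bound: for $s,t\in[0,1]$ with $|s-t|\leq 1/2$, $|\eta(s)-\eta(t)|\leq\eta(|s-t|)$. This bound fails without the hypothesis $|s-t|\leq 1/2$ (e.g.\ $s=0.99$, $t=0.01$), so that hypothesis must be used in an essential way. Setting $\theta:=|s-t|$ and assuming $s\leq t$, subadditivity of $\eta$ (immediate from concavity and $\eta(0)=0$) yields $\eta(t)-\eta(s)\leq\eta(\theta)$, which is one direction. For the reverse direction I would examine $\phi(u):=\eta(u+\theta)+\eta(\theta)-\eta(u)$ on $[0,1-\theta]$: concavity of $\eta$ gives $\phi'(u)=\eta'(u+\theta)-\eta'(u)\leq 0$, so $\phi$ attains its minimum at $u=1-\theta$ where $\phi(1-\theta)=\eta(\theta)-\eta(1-\theta)$. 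The restriction $\theta\leq 1/2$ ensures this is non-negative, via the elementary observation that $x\mapsto\eta(x)-\eta(1-x)$ vanishes at both endpoints of $[0,1/2]$ and has $\phi''$ of a single sign on $(0,1/2)$, forcing non-negativity in between.

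Once the pointwise bound is established, it applies to each summand because $\theta_x\leq\sum_y\theta_y=\|p-q\|\leq\Theta\leq 1/2$, yielding $|H(p)-H(q)|\leq\sum_x\eta(\theta_x)$. For the second step, set $\Theta':=\sum_x\theta_x\leq\Theta$ and, when $\Theta'>0$, define the auxiliary distribution $r(x):=\theta_x/\Theta'\in\mathfrak P(\mathbf X)$. The algebraic identity $\sum_x\eta(\theta_x)=\Theta' H(r)-\Theta'\log\Theta'$ combined with the uniform bound $H(r)\leq\log|\mathbf X|$ gives $\sum_x\eta(\theta_x)\leq -\Theta'\log(\Theta'/|\mathbf X|)$. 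Finally, the function $t\mapsto -t\log(t/|\mathbf X|)$ is increasing on $[0,|\mathbf X|/e]$, and the chain of inclusions $[0,\Theta]\subset[0,1/2]\subset[0,|\mathbf X|/e]$ (valid for $|\mathbf X|\geq 2$; the case $|\mathbf X|=1$ forces $p=q$ and is trivial) lets me replace $\Theta'$ by $\Theta$ in the upper bound, giving the claimed estimate $|H(p)-H(q)|\leq -\Theta\log(\Theta/|\mathbf X|)$.
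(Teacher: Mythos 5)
Your proof is correct. The paper does not actually prove this lemma itself --- it is quoted directly from Csisz\'ar--K\"orner (their Lemma 2.7) --- and your argument is essentially the standard proof of that textbook result: the pointwise modulus-of-continuity bound $|\eta(s)-\eta(t)|\leq\eta(|s-t|)$ for $|s-t|\leq 1/2$ (where $\eta(t)=-t\log t$), followed by renormalizing the differences $|p(x)-q(x)|$ into a probability distribution $r$ and invoking $H(r)\leq\log|\mathbf X|$ together with monotonicity of $t\mapsto -t\log(t/|\mathbf X|)$ on $[0,1/2]$. All steps check out (the reference to ``$\phi''$'' in your verification that $x\mapsto\eta(x)-\eta(1-x)$ is nonnegative on $[0,1/2]$ is a harmless slip for the second derivative of that function, which is indeed of one sign there).
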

Another very important estimate is the following one (a derivation can e.g. be found in \cite{noetzel-2ptypicality}):
\begin{equation}
 2^{n(H(\bar\lambda)-\frac{2d^6}{n}\log(2n))}\leq\dim F_\lambda\leq 2^{nH(\bar\lambda)}\qquad (\lambda\in\mathbb Y_{d,n}).\label{eqn4}
\end{equation}
During our investigation we shall need the following sets of distributions: For every $q\in\mathfrak P([d])$ and $1\leq k\leq d$, set
\begin{align}
\mathfrak P(q,k):=\left\{p\in\mathfrak P([d]^k):\begin{array}{l}N(i|(i_1,\ldots,i_k))>1\ \Rightarrow p((i_1,\ldots,i_k))=0\\ p([d]^k_i)=k\cdot q(i)\ \forall\ i\in[d]\end{array}\right\}.
\end{align}
Such distributions can be constructed by taking a unitary $U\in M_d$ and defining $p\in\mathfrak P([d]^k)$ via $p((i_1,\ldots,i_k)):=|\langle e_{i_1}\otimes\ldots e_{i_k},U^{\otimes k}v_k\rangle|^2$. This ensures the validity of $N(i|(i_1,\ldots,i_k))>1\ \Rightarrow p((i_1,\ldots,i_k))=0$. Lemma \ref{lemma:cauchy-binet-type-formula} then delivers the values of $p([d]^k_i)$. This connection demonstrates that $\mathfrak P(q,k)\neq\emptyset$ is possible, thus making our definition nontrivial.\\
We now switch the topic one last time in this section and concentrate on additional entropic quantities which are necessary in the remainder:
\begin{definition}[Reverse Sandwiched Relative Entropy]\label{def:reverse-sandwiched-renyi-entropy}
For $\rho,\sigma\in\cS(\mathbb C^d)$ with $\supp\rho\subset\supp\sigma$ and $\alpha\in\mathbb R\backslash\{0\}$, set
\begin{align}
\hat D_\alpha(\rho\|\sigma):=\frac{1}{\alpha-1}\log\tr\left\{\left(\rho^{\frac{\alpha}{2(1-\alpha)}}\sigma\rho^{\frac{\alpha}{2(1-\alpha)}}\right)^{1-\alpha}\right\}.
\end{align}
\end{definition}
The revere sandwiched relative entropy is derived from the sandwiched relative entropy, which was defined in \cite{wilde-winter-yang} and \cite{lennert-dupuis-szehr-fehr-tomamichel}:
\begin{definition}[Sandwiched Relative Entropy]\label{def:sandwiched-renyi-entropy}
For $\rho,\sigma\in\cS(\mathbb C^d)$ with $\supp\rho\subset\supp\sigma$ and $\alpha\in\mathbb R\backslash\{0\}$, set
\begin{align}
\tilde D_\alpha(\rho\|\sigma):=\frac{1}{\alpha-1}\log\tr\left\{\left(\sigma^{\frac{1-\alpha}{2\alpha}}\rho\sigma^{\frac{1-\alpha}{2\alpha}}\right)^{\alpha}\right\}.
\end{align}
\end{definition}
As was made explicit in \cite[Equation 11]{audenaert-datta}, the two quantities are related through the following equation:
\begin{align}
(\alpha-1)\hat D_\alpha(\rho\|\sigma)=(-\alpha)\tilde D_{1-\alpha}(\sigma\|\rho).
\end{align}
The sandwiched relative entropy $\tilde D_\alpha$ has been proven to have a huge number of highly desirable properties in, among others, the work \cite{beigi} and \cite{lennert-dupuis-szehr-fehr-tomamichel}. From one of its origins, it is intimately connected to quantum channel coding \cite{wilde-winter-yang}. Applications are also found in hypothesis testing \cite{mosonyi}, \cite{mosonyi-ogawa}, \cite{mosonyi-ogawa-2}, \cite{hayashi-tomamichel}.\\
A more general definition was made by Audenaert and Datta \cite{audenaert-datta}. It adds the parameter $z>0$ and reads
\begin{definition}[$\alpha-z$ relative entropy]
For $\rho,\sigma\in\cS(\mathbb C^d)$ with $\supp\rho\subset\supp\sigma$, $\alpha\in\mathbb R\backslash\{0\}$ and $z>0$, set
\begin{align}
D_{\alpha,z}(\rho\|\sigma):=\frac{1}{\alpha-1}\log\tr\left\{\left(\rho^{\alpha/z}\sigma^{(1-\alpha)/z}\right)^z\right\}.
\end{align}
\end{definition}
The work \cite{audenaert-datta} not only defined this quantity but also provided a lot of details and especially gave an explicit formula for the limit $\lim_{\alpha\to1}D_{\alpha,1-\alpha}(\rho\|\sigma)$ that we shall use in the proof of Theorem \ref{main-theorem}.
\end{section}
\begin{section}{Main Results and Definitions\label{sec:main-results-and-definitions}}
Throughout, we make our definitions with respect to one fixed but arbitrary orthonormal basis $B_d=\{e_1,\ldots,e_d\}$ - the standard basis within $\mathbb C^d$. Every matrix and also every quantum state are represented with respect to that basis.
\begin{definition}\label{def:corner-point-AudDat}
To any pair $(\rho,\sigma)\in\mathcal S(\mathbb C^d)\times\mathcal S(\mathbb C^d)$ we assign a unitary transformation $U_\rho$ such that $U_\rho\rho U_\rho^\dag$ is diagonal and has its diagonal entries sorted in descending order. We can then define the function
\begin{align}
\Phi&\ :\ \cS(\mathbb C^d)\times\mathbb \cS(C^d)\to\mathbb R_+,\\
(\rho,\sigma)&\mapsto-\lim_{n\to\infty}\frac{1}{n}\log\tr\{U_\rho^{\dag\otimes n}P_{f^{(n)},\lambda^{(n)}}U_\rho^{\otimes n}\sigma^{\otimes n}\},
\end{align}
where the sequence $(f^{(n)},\lambda^{(n)})_{n\in\nn}$ satisfies $\lim_{n\to\infty}\tfrac{1}{n}\lambda^{(n)}=(\langle e_1,\rho e_1\rangle,\ldots,\langle e_d,\rho e_d\rangle$ and $f^{(n)}=\lambda^{(n)}$ for all $n\in\mathbb N$.
\end{definition}
Although this definition is ambiguous whenever $\rho$ has degenerate eigenvalues we show later that it is still well-defined.
\begin{definition}\label{def:corner-point-SteLem}
In the same way as in Definition \ref{def:corner-point-AudDat} we take any unitary matrix $U_\sigma$ such that $U_\sigma^\dag\sigma U_\sigma$ is diagonal in $B_d$. We define the function $\Lambda:\mathcal S(\mathbb C^d)\times\mathcal S(\mathbb C^d)\to\mathbb R_+$ by taking any sequence of frequencies satisfying $\lim_{n\to\infty}\tfrac{1}{n}f^{(n)}=(\langle e_1,U_\sigma\rho U_\sigma^\dag e_1\rangle,\ldots,\langle e_d,U_\sigma\rho U_\sigma^\dag e_d\rangle)$ and a sequence of Young frames satisfying $\lim_{n\to\infty}\tfrac{1}{n}\lambda^{(n)}=\spec(\rho)$. We then set
\begin{align}
\Lambda&\ :\ \cS(\mathbb C^d)\times\cS(\mathbb C^d)\to\mathbb R_+,\\
(\rho,\sigma)&\mapsto-\lim_{n\to\infty}\frac{1}{n}\log\tr\{U_\sigma^{\otimes n}P_{f^{(n)},\lambda^{(n)}}U_\sigma^{\dag\otimes n}\sigma^{\otimes n}\}.
\end{align}
Again, we show later that this definition does not depend on a particular one among the many possible choices of $U_\sigma$.
\end{definition}
The third quantity we define is
\begin{definition}\label{def:Delta_A} Let $d\in\nn$ and $A\in M_d$. Define a function $\Delta_A$ as
\begin{align}
\Delta_A:\mathfrak P([d])\times\mathfrak P^\downarrow([d])\times\cS(\mathbb C^d)\to\mathbb R_+,
\end{align}
\begin{align}
(p,s,\sigma)\mapsto\left\{\begin{array}{ll}-\lim_{n\to\infty}\frac{1}{n}\log\tr\{A^{\otimes n}P_{f^{(n)},\lambda^{(n)}}A^{\dag\otimes n}\sigma^{\otimes n}\},&\ \mathrm{if}\ s\succeq p^\downarrow\\
\infty,&\ \mathrm{else}\end{array}\right.
\end{align}
where $(f^{(n)})_{n\in\nn}$ and $(\lambda^{(n)})_{n\in\nn}$ are sequences satisfying $\lim_{n\to\infty}\tfrac{1}{n}\lambda^{(n)}=s\in\mathfrak P^\downarrow([d])$, $\lim_{n\to\infty}\tfrac{1}{n}f^{(n)}=p\in\mathfrak P([d])$ and the sequences are constructed such that $\lambda^{(n)}\succeq f^{(n)\downarrow}$ for all $n\in\nn$.
\end{definition}
Of course $\Delta_{U}=\Phi$ whenever $U$ is a unitary matrix and $U\rho U^\dag$ is diagonal in $B_d$ and has decreasing diagonal entries and $s=\spec\rho$. Also, $\Delta_U=\Lambda$ whenever $U$ is unitary and $U^\dag\sigma U$ is diagonal in $B$ and $p=\pinch\rho$, $s=\spec\rho$. In the remaining cases it is not clear from their definition that $\Phi$, $\Lambda$ or $\Delta$ are well defined. Note that $s\succeq p^\downarrow$ implies the existence of sequences $(\lambda^{(n)})_{n\in\nn}$ and $f^{(n)})_{n\in\nn}$ with respective limits $s$ and $p$ and such that the Kostka numbers $K_{\lambda^{(n)},f^{(n)\downarrow}}$ of these sequences are non-negative \cite[Exercise 2]{fulton}). If $K_{\lambda^{(n)},f^{(n)}}>0$ however then the construction provided in \cite[Chapter 5.5]{sternberg} proves that $V_{f^{(n)},\lambda^{(n)}}\neq\{0\}$. We will use this connection more explicitly in the proofs of Theorems \ref{main-theorem} and \ref{theorem:characterization-of-DU}.\\
The question whether $\Delta_A$ is well-defined in the sense of being independent from the specific sequences $(f^{(n)})_{n\in\nn}$ and $(\lambda^{(n)})_{n\in\nn}$ will be settled here only for $d=2$ in Theorem \ref{theorem:characterization-of-DU}. We are thus left with a conjecture:
\begin{conjecture}
The functions $\Delta_A$ from Definition \ref{def:Delta_A} are well-defined for every $d\geq1$ and every $A\in M_d$.
\end{conjecture}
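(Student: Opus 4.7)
The plan is to prove well-definedness by establishing a continuous variational characterization of the limit that depends only on $(p,s,A,\sigma)$. First I would exploit the fact that $A^{\otimes n}$ commutes with the $S_n$-action $\mathbb B$, so that $M_n := A^{\dag\otimes n}\sigma^{\otimes n}A^{\otimes n}$ is permutation-invariant. Cyclicity of the trace gives
\[
\tr\{A^{\otimes n}P_{f^{(n)},\lambda^{(n)}}A^{\dag\otimes n}\sigma^{\otimes n}\} = \tr\{P_{f^{(n)},\lambda^{(n)}}M_n\},
\]
and the Schur-Weyl decomposition of $M_n$ shows that only its $\lambda^{(n)}$-isotypical block contributes. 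Schur's lemma forces that block to have the form $\tilde M_{n,\lambda^{(n)}}\otimes\idn_{F_{\lambda^{(n)}}}$ on the multiplicity-times-irreducible splitting, so the trace factorises as $\dim(F_{\lambda^{(n)}})\cdot\tr\{Q_{f^{(n)},\lambda^{(n)}}\tilde M_{n,\lambda^{(n)}}\}$, where $Q_{f^{(n)},\lambda^{(n)}}$ is the projection on the multiplicity space picked out by the type constraint $V_{f^{(n)}}\cap V_{\lambda^{(n)}}$.

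Given this block structure, I would next derive a variational formula for the rate. Expanding $M_n$ in an eigenbasis of $\sigma$ and grouping terms by their classical types $g^{(n)}$ (with limit $q$), the contribution of each group is weighted by $2^{-n\sum_i q(i)\log t_i}$ (with $t_i$ the eigenvalues of $\sigma$) together with combinatorial weights involving matrix coefficients of $A^{\otimes n}$. Combining the type-counting bound \eqref{eqn2}, the dimension estimate \eqref{eqn4}, and the Cauchy-Binet identity of Lemma \ref{lemma:cauchy-binet-type-formula} to evaluate the matrix coefficients representation-theoretically, one aims to show that
\[
-\tfrac{1}{n}\log\tr\{P_{f^{(n)},\lambda^{(n)}}M_n\}\;\longrightarrow\;\inf_{q\in\mathfrak{P}([d])}\Psi(p,s,q,A,\sigma)
\]
for a continuous functional $\Psi$. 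Since the right-hand side depends only on the limits $p,s$ of the sequences, and not on the particular approximants, independence from the choice of $(f^{(n)},\lambda^{(n)})$ follows immediately. Existence of the limit for a fixed sequence is then a consequence of standard compactness of the variational problem together with a Fekete-type subadditivity estimate on $\log\tr\{P_{f,\lambda}M_n\}$.

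The main obstacle is the second step: matching the combinatorial structure of $Q_{f,\lambda}$ on the multiplicity space with the eigenstructure of $M_n$ when $A$ is neither unitary nor diagonal in an eigenbasis of $\sigma$. For unitary $A=U$, Lemma \ref{lemma:cauchy-binet-type-formula} rewrites matrix coefficients in terms of partial sums of entries of $U^{\otimes k}$ and matches Kostka multiplicities cleanly; Theorem \ref{theorem:characterization-of-DU} handles the full $d=2$ case. For general $A\in M_d$ I would polar-decompose $A=U|A|$, absorb $U$ by the preceding analysis, and treat $|A|^{\otimes n}\sigma^{\otimes n}|A|^{\otimes n}$ (still permutation-invariant) by an approximation argument in which $|A|$ is replaced by operators which commute with the chosen basis, the error being controlled through continuity of $\Psi$. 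A secondary obstacle is the non-smooth behaviour of the Kostka numbers $K_{\lambda^{(n)},f^{(n)\downarrow}}$ at boundary majorization configurations; here a polynomial-counting argument based on $|\mathbbm T_n|\cdot|\mathbb Y_{d,n}|\leq(n+1)^{2d}$ absorbs $\mathcal O(1)$ shifts of $f^{(n)}$ and $\lambda^{(n)}$, since such shifts change the trace by at most a sub-exponential factor and therefore do not affect the limit.
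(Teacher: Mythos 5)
First, note that the statement you are addressing is stated in the paper as a \emph{conjecture}: the paper itself proves well-definedness of $\Delta_A$ only for $d=2$ (via Theorem \ref{theorem:characterization-of-DU}) and for the special parameter choices covered by Theorems \ref{main-theorem} and \ref{theorem:scalar-products-for-arbitrary-d}, and explicitly leaves the general case open. Your text is a research plan rather than a proof, and the plan does not close the gap that makes this a conjecture. The sound parts are the opening reductions: $A^{\otimes n}$ does commute with $\mathbb B$, so $M_n:=A^{\dag\otimes n}\sigma^{\otimes n}A^{\otimes n}$ is permutation-invariant, only the $\lambda^{(n)}$-isotypical block contributes, and $P_{f^{(n)},\lambda^{(n)}}$ factors as $Q_{f^{(n)},\lambda^{(n)}}\otimes\idn_{F_{\lambda^{(n)}}}$ on that block. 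But this reduction is where the real problem \emph{begins}, not where it ends: the paper stresses that the multiplicity spaces $V_{f,\lambda}$ are generically reducible, so $\tr\{Q_{f^{(n)},\lambda^{(n)}}\tilde M_{n,\lambda^{(n)}}\}$ is not pinned down by Schur's lemma, and evaluating it requires knowing how $Q_{f,\lambda}$ sits relative to the eigenstructure of $\tilde M_{n,\lambda}$ — precisely the computation the paper can only carry out when $V_{f,\lambda}$ is irreducible ($f$ a permutation of $\lambda$, or $d=2$).

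The two steps you lean on to finish are both unsupported. (i) The functional $\Psi$ is never constructed; asserting that the normalized log-trace converges to $\inf_q\Psi(p,s,q,A,\sigma)$ for a continuous $\Psi$ depending only on the limits $p,s$ is logically equivalent to the conjecture itself, so invoking ``continuity of $\Psi$'' later (to control the error when $|A|$ is replaced by a basis-diagonal operator) is circular. Moreover that replacement is not a small perturbation: it changes the trace by factors that are exponential in $n$ in general, so no sub-exponential error bookkeeping can absorb it. (ii) The Fekete-type subadditivity of $\log\tr\{P_{f,\lambda}M_n\}$ is asserted but would not hold in the naive form: $P_{f^{(n+m)},\lambda^{(n+m)}}$ does not factor through $P_{f^{(n)},\lambda^{(n)}}\otimes P_{f^{(m)},\lambda^{(m)}}$, because restricting $F_{\lambda^{(n+m)}}$ to $S_n\times S_m$ involves Littlewood--Richardson branching, and the type constraint on $f$ does not split additively along a fixed partition of the tensor factors. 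Even granting a Fekete argument for one carefully nested sequence, well-definedness requires comparing \emph{arbitrary} sequences with the same limits $(p,s)$, which needs a quantitative continuity estimate in $(\tfrac{1}{n}f^{(n)},\tfrac{1}{n}\lambda^{(n)})$ that neither you nor the paper supplies. Lemma \ref{lemma:cauchy-binet-type-formula} and the polynomial count $|\mathbbm T_n|\cdot|\mathbb Y_{d,n}|\leq(n+1)^{2d}$ are useful ingredients, but they handle only the $k$-antisymmetric building blocks and the enumeration of types, not the reducible multiplicity spaces for general $(f,\lambda)$ with $d>2$. As it stands, the conjecture remains open, and your proposal identifies the right obstacles without overcoming them.
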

For the other two quantities it will become immediate that they are well-defined once we calculate the limits in the Definitions \ref{def:corner-point-AudDat} and \ref{def:corner-point-SteLem}. This task leads us to the following theorem:
 \begin{theorem}\label{main-theorem}
For every two states $\rho,\sigma\in\cS(\mathbb C^d)$ we have
\begin{enumerate}
\item
$\begin{aligned}
\Phi(\rho\|\sigma)=\lim_{\alpha\to1}\hat D_{\alpha}(\rho\|\sigma),
\end{aligned}$
\item
$\begin{aligned}
\Lambda(\rho\|\sigma)=D(\rho\|\sigma).
\end{aligned}$
\end{enumerate}
\end{theorem}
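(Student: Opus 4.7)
The plan is to handle both parts by first moving the unitary conjugations through the trace and then carrying out explicit representation-theoretic calculations in the resulting basis. For Part 2, cyclicity of the trace gives $\tr\{U_\sigma^{\otimes n}P_{f,\lambda}U_\sigma^{\dag\otimes n}\sigma^{\otimes n}\}=\tr\{P_{f,\lambda}\tilde\sigma^{\otimes n}\}$ with $\tilde\sigma:=U_\sigma^\dag\sigma U_\sigma=\sum_i q(i)|e_i\rangle\langle e_i|$ diagonal in $B_d$. Since $\tilde\sigma^{\otimes n}$ acts on $V_f$ as the scalar $\prod_i q(i)^{f(i)}$, one obtains $\tr\{P_{f,\lambda}\tilde\sigma^{\otimes n}\}=\dim V_{f,\lambda}\cdot\prod_i q(i)^{f(i)}$. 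By Schur--Weyl duality, $\dim V_{f,\lambda}=K_{\lambda,f^\downarrow}\cdot\dim F_\lambda$, where the Kostka number $K_{\lambda,f^\downarrow}$ equals the dimension of the weight-$f$ subspace of the $GL_d$-Weyl module $G_\lambda$. Since $\dim G_\lambda$ is polynomial in $n$ for fixed $d$ by the Weyl dimension formula, one has $K_{\lambda,f^\downarrow}\leq\mathrm{poly}(n)$. Combining this with \eqref{eqn4} gives $\tfrac{1}{n}\log\dim V_{f^{(n)},\lambda^{(n)}}\to H(s)=S(\rho)$, and with $\sum_i\overline{f^{(n)}}(i)\log q(i)\to\tr\{\rho\log\sigma\}$ one reads off $\Lambda(\rho,\sigma)=-S(\rho)-\tr\{\rho\log\sigma\}=D(\rho\|\sigma)$. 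Positivity of the Kostka numbers along the sequence is guaranteed by $s\succeq p^\downarrow$ (Schur--Horn), which is built into Definition~\ref{def:corner-point-SteLem}.

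For Part 1, set $\sigma':=U_\rho\sigma U_\rho^\dag$ and analyze $\tr\{P_{\lambda,\lambda}(\sigma')^{\otimes n}\}$. The crucial representation-theoretic input is that the weight-$\lambda$ subspace of $G_\lambda$ is one-dimensional ($K_{\lambda,\lambda}=1$), spanned by the explicit highest-weight vector
\begin{equation*}
v_\lambda^{HW}:=v_{k_1}\otimes v_{k_2}\otimes\cdots\otimes v_{k_{\lambda_1}},
\end{equation*}
where $k_j$ is the length of the $j$-th column of $\lambda$ and the $v_k$ are the antisymmetrizers defined in Section~\ref{sec:definitions}. Hence $V_{\lambda,\lambda}\cong F_\lambda\otimes\mathbb C\cdot v_\lambda^{HW}$ inside the $F_\lambda$-isotypic component of $(\mathbb C^d)^{\otimes n}$. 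Since both $P_{\lambda,\lambda}$ and $(\sigma')^{\otimes n}$ commute with the $S_n$-action, Schur's lemma reduces $P_{\lambda,\lambda}(\sigma')^{\otimes n}P_{\lambda,\lambda}$ on $V_{\lambda,\lambda}\cong F_\lambda$ to a scalar multiple of the identity, so that
\begin{equation*}
\tr\{P_{\lambda,\lambda}(\sigma')^{\otimes n}\}=\dim F_\lambda\cdot\frac{\langle v_\lambda^{HW},(\sigma')^{\otimes n}v_\lambda^{HW}\rangle}{\|v_\lambda^{HW}\|^2}.
\end{equation*}
A direct calculation using the antisymmetric structure of $v_k$ (related to the Cauchy--Binet type identity of Lemma~\ref{lemma:cauchy-binet-type-formula}) yields $\|v_k\|^2=(k-1)!$ and $\langle v_k,(\sigma')^{\otimes k}v_k\rangle=(k-1)!\det(\sigma'_{[k]})$, where $\sigma'_{[k]}$ denotes the top-left $k\times k$ principal submatrix of $\sigma'$. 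By multiplicativity across tensor factors,
\begin{equation*}
\tr\{P_{\lambda,\lambda}(\sigma')^{\otimes n}\}=\dim F_\lambda\cdot\prod_{k=1}^{d}\det(\sigma'_{[k]})^{\lambda_k-\lambda_{k+1}},
\end{equation*}
with the convention $\lambda_{d+1}:=0$. Taking $-\tfrac{1}{n}\log$, applying \eqref{eqn4} and passing to the limit gives $\Phi(\rho,\sigma)=-S(\rho)-\sum_{k=1}^d(s_k-s_{k+1})\log\det(\sigma'_{[k]})$, which coincides with the explicit formula for $\lim_{\alpha\to1}D_{\alpha,1-\alpha}(\rho\|\sigma)$ provided in \cite{audenaert-datta}; the identity $\tr\{(AB)^p\}=\tr\{(A^{1/2}BA^{1/2})^p\}$ for $A,B\geq 0$ shows $D_{\alpha,1-\alpha}=\hat D_\alpha$. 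Well-definedness under the choice of $U_\rho$ when $\rho$ has degenerate eigenvalues follows because the coefficients $s_k-s_{k+1}$ vanish precisely on the degenerate blocks, absorbing any ambiguity in the choice of eigenbasis within those blocks.

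The main obstacle is the representation-theoretic step in Part 1, namely the identification $V_{\lambda,\lambda}=F_\lambda\otimes\mathbb C\cdot v_\lambda^{HW}$ and the determinant identity for $\langle v_k,(\sigma')^{\otimes k}v_k\rangle$, together with the bookkeeping needed to match the closed-form expression with Audenaert--Datta's limit. Part 2 is essentially the asymptotic estimate already anticipated in the introduction and treated in \cite{noetzel-hypothesis}, so the only non-trivial ingredient there is the polynomial bound on $K_{\lambda,f^\downarrow}$ coming from the Weyl dimension formula.
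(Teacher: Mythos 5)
Your proposal is correct and follows essentially the same route as the paper's proof of part 1: multiplicity one of $F_\lambda$ in $V_{\lambda,\lambda}$, Schur's lemma to reduce $\tr\{P_{\lambda,\lambda}(\sigma')^{\otimes n}\}$ to $\dim F_\lambda\cdot\langle v,(\sigma')^{\otimes n}v\rangle/\|v\|_2^2$ for the column-antisymmetrized highest-weight vector $\otimes_i v_i^{\otimes(\lambda_i-\lambda_{i+1})}$, the principal-minor evaluation $\langle v_k,(\sigma')^{\otimes k}v_k\rangle\propto\det(\sigma'_{1:k,1:k})$, the dimension estimate \eqref{eqn4}, and the identification with the Audenaert--Datta limit formula. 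For part 2 the paper merely defers to \cite{noetzel-hypothesis}, and your sketch (scalar action of the diagonal $\tilde\sigma^{\otimes n}$ on $V_f$ together with the polynomial bound on the Kostka numbers) is the same standard argument used there.
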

\begin{remark}
The second of the above statements has been proven in \cite{noetzel-hypothesis} and will not be proven here again.
\end{remark}
This result raises some interest into a more in-depth study of the projections $P_{f,\lambda}$. As we already observed before \cite{noetzel-hypothesis} the subspaces $V_{f,\lambda}$ are generically not irreducible. In these cases computations are less straightforward as in the cases where we have irreducible representations. We therefore concentrate here on a study of cases where at least one of the subspaces involved into the calculation is irreducible. The decomposition
\begin{align}
\sigma^{\otimes n}=\sum_{f,\lambda}\frac{\tr\{P_{f}(U\sigma U^\dag)^{\otimes n}\}}{|T_f|}\cdot U^{\dag\otimes n} P_{f,\lambda}U^{\otimes n}
\end{align}
which is valid whenever $\sigma$ is diagonal in $B_d$ motivates the study of objects of the form $\tr\{P_{f,\lambda}U^{\otimes n}P_{f',\lambda'}U^{\dag\otimes n}\}=\tr\{P_{f,\lambda}U^{\otimes n}P_{f',\lambda}U^{\dag\otimes n}\}\delta(\lambda,\lambda')$ where $U\in M_d$ is unitary. Of course this is done again in the asymptotic setting, and with a slight increase in generality:
\begin{definition}\label{def:scalar-products-for-arbitrary-d}
For every $d\geq2$, asymptotic shapes $s$ of Young frame, frequencies $q\in\mathfrak P([d])$, and matrix $A\in M_{d}$ we define
\begin{align}
\Theta(q,s,A):=-\lim_{n\to\infty}\frac{1}{n}\log\tr\{P_{\lambda^{(n)},\lambda^{(n)}}A^{\otimes n}P_{f^{(n)},\lambda^{(n)}}A^{\dag\otimes n}\},
\end{align}
where $(\lambda^{(n)})_{n\in\nn}$ is a sequence of Young frames satisfying $\lim_{n\to\infty}\tfrac{1}{n}\lambda^{(n)}=s$ and $(f^{(n)})_{n\in\nn}$ a sequence of frequencies satisfying $\lim_{n\to\infty}\tfrac{1}{n}f^{(n)}=q$.
\end{definition}
The results we obtain from the study of $\Theta$ are presented in the next theorem:
\begin{theorem}\label{theorem:scalar-products-for-arbitrary-d}
For every $d\geq2$ and asymptotic shape $s$ of Young frames as well as asymptotic frequency $q$ and $A\in M_d$ the function $\Theta$ assumes the value
\begin{align}
\Theta(q,s,A)=H(s)-\min_{W(\hat s)=q}\sum_{k=1}^d\hat s(k)\min_{p\in\mathfrak P(W(\delta_k),k)}\tfrac{1}{k}D(p\|p_k)
\end{align}
where $p_k(i_1,\ldots,i_k):=|\langle e_{i_1}\otimes\ldots\otimes e_{i_k},A^{\otimes k}\frac{1}{\sqrt{k}}\sum_{\tau\in S_k}\sgn(\tau)\mathbb B(\tau)\left(e_1\otimes\ldots e_k\right)\rangle|^2$ and $\hat s\in\mathfrak P([d])$ is defined by $\hat s(k):=\left(s(k)-s(k+1)\right)\cdot k$ for all $k\in[d]$ using the convention $s(d+1):=0$. For every unitary matrix $U$ and fixed asymptotic shape $s$, the function $q\mapsto\Theta(q,s,U)$ assumes its minimum at a distribution $\tilde q$ which satisfies $\tilde q(i):=\sum_{k=1}^d\hat s(k)\sum_{l=1}^k\tfrac{|u_{il}|^2}{k}$.
\end{theorem}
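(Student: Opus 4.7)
The plan is to exploit the well-known fact that the Kostka number $K_{\lambda,\lambda}$ equals $1$, so that $V_{\lambda,\lambda}$ is an irreducible $S_n$-sub\-representation of $V_\lambda$ of dimension $\dim F_\lambda$. Since $A^{\otimes n}$ commutes with the $S_n$-action on $(\mathbb C^d)^{\otimes n}$ and $P_{f,\lambda}$ is $S_n$-invariant, the operator $A^{\otimes n}P_{f,\lambda}A^{\dag\otimes n}$ is $S_n$-equivariant; Schur's lemma applied on the irreducible space $V_{\lambda,\lambda}$ then forces $P_{\lambda,\lambda}A^{\otimes n}P_{f,\lambda}A^{\dag\otimes n}P_{\lambda,\lambda}$ to be a scalar multiple of $P_{\lambda,\lambda}$. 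Evaluating this scalar on a unit vector $v_\lambda/\|v_\lambda\|\in V_{\lambda,\lambda}$ and tracing both sides collapses the quantity of interest into
\[
\tr\{P_{\lambda,\lambda}A^{\otimes n}P_{f,\lambda}A^{\dag\otimes n}\}=\frac{\dim F_\lambda}{\|v_\lambda\|^2}\,\|P_{f,\lambda}A^{\dag\otimes n}v_\lambda\|^2,
\]
for any concrete nonzero $v_\lambda\in V_{\lambda,\lambda}$, so the whole problem reduces to choosing a convenient $v_\lambda$ and controlling the projected norm.

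The natural candidate is obtained by reading the Young diagram of $\lambda$ column-by-column. With $n_k:=\lambda_k-\lambda_{k+1}$ columns of length $k$, group the $n$ tensor factors into blocks of the corresponding sizes and set $v_\lambda$ to be the tensor product containing one copy of $v_k$ per length-$k$ column. Because each $v_k$ has letter-frequency $(1,\ldots,1,0,\ldots,0)$ on $[k]$, the frequency of $v_\lambda$ is exactly $\lambda$, so $v_\lambda\in V_\lambda$; and because $v_\lambda$ is built out of column antisymmetrisers it carries the $\lambda$-isotypical label, so in fact $v_\lambda\in V_{\lambda,\lambda}$. A useful simplification is that $A^{\dag\otimes n}v_\lambda$ remains in the $\lambda$-isotypical subspace because $A^{\dag\otimes n}$ commutes with the $S_n$-action; consequently $P_{f,\lambda}A^{\dag\otimes n}v_\lambda=P_fA^{\dag\otimes n}v_\lambda$ and the isotypical projector can be dropped.

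Expanding in the standard basis one gets $(A^{\dag})^{\otimes k}v_k=\sum_{\vec i\in [d]^k}\beta^{(k)}_{\vec i}\,e_{i_1}\otimes\ldots\otimes e_{i_k}$ whose squared coefficients coincide with $p_k$ (up to conjugation and the constant $\|v_k\|^2$ which is absorbed into the overall prefactor). Hence $A^{\dag\otimes n}v_\lambda$ becomes a superposition indexed by tuples $\vec J=(\vec i^{(k,j)})_{k,j}$ assigning an element of $[d]^k$ to every length-$k$ column, with squared amplitudes $\prod_{k,j}p_k(\vec i^{(k,j)})/\|v_k\|^2$. The projector $P_f$ keeps exactly those $\vec J$ whose concatenation has type $f$. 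Parametrising the empirical column distribution by $\pi^{(k)}$ and the channel $W\in C([d],[d])$ via $W(i|k):=\tfrac{1}{k}\pi^{(k)}([d]^k_i)$, the frequency constraint $\tfrac{1}{n}f\to q$ translates asymptotically to $W(\hat s)=q$ while the inner constraint forces $\pi^{(k)}\in\mathfrak P(W(\delta_k),k)$ length-by-length. Sanov's theorem combined with the type-counting bound \eqref{eqn2} and the dimension estimate \eqref{eqn4} for $\dim F_\lambda$, plugged back into the Schur-reduction identity, produces the asymptotic exponent claimed in the theorem. The main technical burden is to match the Sanov lower and upper bounds uniformly in the choice of sequences $(f^{(n)},\lambda^{(n)})$ with the prescribed limits, handled by a standard $\eps$-approximation argument on the finite set of empirical types.

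For the second statement specialise to $A=U$ unitary. A direct computation with the normalised antisymmetric state proportional to $v_k$ shows that its single-factor reduced density matrix equals $\tfrac{1}{k}P_{[k]}$, so after conjugation by $U$ each position carries outcome $i$ with probability $\tfrac{1}{k}\sum_{l=1}^k|u_{il}|^2$; summing over the $k$ positions yields $p_k([d]^k_i)=\sum_{l=1}^k|u_{il}|^2$ up to the normalising constant. Consequently the channel $W^\star(i|k):=\tfrac{1}{k}\sum_{l=1}^k|u_{il}|^2$ simultaneously places $p_k\in\mathfrak P(W^\star(\delta_k),k)$ at every $k$, trivialising all inner KL divergences and extremising the outer $\min_W$ in the formula. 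The corresponding $q$-marginal is $W^\star(\hat s)(i)=\sum_k\hat s(k)\sum_{l=1}^k\tfrac{|u_{il}|^2}{k}=\tilde q(i)$, identifying the minimiser as claimed. Strict convexity of $D(\cdot\|p_k)$ on the relevant sub-simplex secures uniqueness.
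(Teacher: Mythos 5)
Your proposal is correct, and its first half is essentially the paper's own argument: the Schur--lemma reduction $\tr\{P_{\lambda,\lambda}M\}=\tfrac{\dim F_\lambda}{\|v_\lambda\|^2}\langle v_\lambda,Mv_\lambda\rangle$, the column-product ansatz $v_\lambda=\bigotimes_k v_k^{\otimes(\lambda_k-\lambda_{k+1})}$, the observation that $P_{f,\lambda}$ may be replaced by $P_f$ on $A^{\dag\otimes n}v_\lambda$, and the block-wise method-of-types evaluation leading to the channel $W$ with $W(\hat s)=q$ and the inner minima over $\mathfrak P(W(\delta_k),k)$ all coincide with the paper's proof. Where you genuinely diverge is the identification of the extremal $\tilde q$: the paper gets there quantitatively, via Pinsker's inequality combined with the norm estimate of Lemma \ref{lemma:norm-estimate}, which in turn rests on the determinant identity of Lemma \ref{lemma:cauchy-binet-type-formula} proved there by Laplace expansion, Cauchy--Binet and a rank-one determinant update; this yields the explicit bound $\Theta(q,s,U)\le H(s)-\tfrac{2}{d}\|q-\tilde q\|^2$, from which strictness of the extremum is immediate. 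You instead observe that the one-body reduced density matrix of the normalized antisymmetric vector $U^{\otimes k}v_k$ is $\tfrac{1}{k}UP_{[k]}U^\dag$, which (since the outcome tuples have no repeated letters, so the position events are disjoint) gives the marginal identity $p_k([d]^k_i)=\sum_{l=1}^k|u_{il}|^2$ directly --- in effect a one-line proof of Lemma \ref{lemma:cauchy-binet-type-formula}, modulo the paper's inconsistent normalization of $v_k$, which you rightly flag --- and then note that the channel $W^\star(i|k)=\tfrac1k\sum_{l=1}^k|u_{il}|^2$ makes every inner divergence vanish. This is shorter and conceptually cleaner, but it buys less: your closing appeal to ``strict convexity of $D(\cdot\|p_k)$'' is too quick, since the constraint set $\mathfrak P(W(\delta_k),k)$ itself moves with $W$; to get uniqueness of the extremizer you still need to argue that for $q\neq\tilde q$ every admissible $W$ has some $k$ with $\hat s(k)>0$ and $W(\delta_k)\neq W^\star(\delta_k)$, hence $p_k\notin\mathfrak P(W(\delta_k),k)$ and the inner minimum is strictly positive, and then a compactness/lower-semicontinuity argument over the admissible $W$. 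That gap is easily filled, whereas the paper's Pinsker route gives strictness (and a rate) for free.
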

\begin{remark}Note that the map $k\mapsto \sum_{l=1}^k\frac{|u_{il}|^2}{k}$ actually defines an element of $C([d],[d])$ since for every $k$ we have $\sum_{i=1}^d\sum_{l=1}^k\frac{|u_{il}|^2}{k}=1$.\end{remark}
Especially the location of the minimum which we describe above made us conjecture an interesting formula via the following route: An application of Pinsker's inequality to above formula for $\Theta$ lets us transform the search for the minimum into a question about distance in norm rather than relative entropy. This decomposition delivers a lower bound which can be shown to equal zero if and only if $q$ has the desired form:
\begin{lemma}[Estimate for norms\label{lemma:norm-estimate}] Let $s\in\mathfrak P^\downarrow([d])$ and define $\hat s\in\mathfrak P([d])$ by $\hat s(k)=\left(s(k)-s(k+1)\right)\cdot k$ for all $k\in[d]$ and using the convention $s(d+1):=0$. Let there be distributions $q,q_1,\ldots,q_k\in\mathfrak P([d])$ such that $\sum_{k=1}^d\hat s(k)q_k=q$. Let further $U\in M_{d\times d}$ be a unitary matrix and $\tilde q\in\mathfrak P([d])$ be defined by $\tilde q(i):=\sum_{k=1}^d\hat s(k)\sum_{l=1}^k\tfrac{|u_{il}|^2}{k}$. For every $k\in[d]$ assume that there exists a $p(\cdot|k)\in\mathfrak P(q_k,k)$. Let finally $p_k\in \mathfrak P([d]^k)$ be defined by $p_k(i_1,\ldots,i_k):=|\langle e_{i_1}\otimes\ldots\otimes e_{i_k},U^{\otimes k}v_k\rangle|^2$. It holds
\begin{align}
\sum_{k=1}^d\hat s(k)\cdot\|p(\cdot|k)-p_k\|&\geq\|q-\tilde q\|.
\end{align}
\end{lemma}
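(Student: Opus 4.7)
The plan is to reduce the claimed inequality to per-$k$ total variation bounds via the triangle inequality, and then to show that each per-$k$ bound follows because $\mathfrak P(\cdot,k)$ is defined through a marginalization that is an $\ell^1$-contraction.

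First I would record two preliminaries. A telescoping computation shows $\sum_{k=1}^d \hat s(k) = \sum_{k=1}^d k\, s(k) - \sum_{k=1}^d k\, s(k+1) = \sum_{k=1}^d s(k) = 1$, so $\hat s \in \mathfrak P([d])$. Next, the vector $v_k$ is antisymmetric under $\mathbb B$, hence so is $U^{\otimes k}v_k$, which forces $p_k(i_1,\ldots,i_k)=0$ whenever two indices coincide. The Cauchy-Binet type identity of Lemma \ref{lemma:cauchy-binet-type-formula} then evaluates the marginal $p_k([d]^k_i) = \sum_{l=1}^k |u_{il}|^2$, so that setting $\tilde q_k(i) := \tfrac1k \sum_{l=1}^k |u_{il}|^2$ one has $p_k \in \mathfrak P(\tilde q_k,k)$ and $\sum_{k=1}^d \hat s(k)\, \tilde q_k = \tilde q$. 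Since $p(\cdot|k) \in \mathfrak P(q_k,k)$ by hypothesis, the parallel identity $\sum_{k=1}^d \hat s(k)\, q_k = q$ holds, with $q_k(i) = \tfrac1k\, p(\cdot|k)([d]^k_i)$.

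Given these identifications, the triangle inequality applied to $q - \tilde q = \sum_k \hat s(k)(q_k - \tilde q_k)$ yields $\|q-\tilde q\| \leq \sum_k \hat s(k)\, \|q_k - \tilde q_k\|$, so it suffices to prove the per-$k$ bound $\|q_k - \tilde q_k\| \leq \|p(\cdot|k) - p_k\|$. For fixed $k$, I would write
\begin{align*}
\|q_k - \tilde q_k\| \;=\; \frac{1}{k}\sum_{i=1}^d \left|\sum_{(i_1,\ldots,i_k)\in [d]^k_i} \bigl(p(\cdot|k)(i_1,\ldots,i_k) - p_k(i_1,\ldots,i_k)\bigr)\right|,
\end{align*}
bring the absolute value inside the inner sum, and use the observation that every tuple with pairwise distinct entries belongs to exactly $k$ of the sets $[d]^k_1,\ldots,[d]^k_d$, while tuples with a repetition belong to none. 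Since both $p(\cdot|k)$ and $p_k$ are supported on distinct-entry tuples, the resulting double sum collapses to $k \cdot \|p(\cdot|k) - p_k\|$, cancelling the factor $1/k$ and finishing the per-$k$ estimate.

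The only conceptually non-trivial ingredient is the marginal evaluation $p_k([d]^k_i) = \sum_{l=1}^k |u_{il}|^2$, which is precisely what Lemma \ref{lemma:cauchy-binet-type-formula} supplies; everything else is bookkeeping. There is no serious obstacle: one triangle inequality peels off the convex combination over $k$, and the contractivity of the marginalization $p \mapsto \tfrac1k\, p([d]^k_{\cdot})$ on distributions supported on distinct-entry tuples handles the inner step. The potential pitfall to watch for is remembering that the $[d]^k_i$ are disjoint only once one restricts to distinct-entry supports, so the identity $\sum_i p([d]^k_i) = k\sum_{\mathrm{distinct}} p$ must be invoked explicitly for both distributions.
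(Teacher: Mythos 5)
Your proposal is correct and follows essentially the same route as the paper: both arguments rest on the marginal identity $p_k([d]^k_i)=\sum_{l=1}^k|u_{il}|^2$ from Lemma \ref{lemma:cauchy-binet-type-formula}, the constraint $p(\cdot|k)([d]^k_i)=k\,q_k(i)$ built into $\mathfrak P(q_k,k)$, the observation that each repetition-free tuple lies in exactly $k$ of the sets $[d]^k_i$ (accounting for the factor $1/k$), and two applications of the triangle inequality — you merely apply them in the opposite order, peeling off the convex combination over $k$ first. The only cosmetic slip is the claim that tuples with a repetition belong to none of the $[d]^k_i$ (they can, if some value still occurs exactly once), but this is harmless since both distributions vanish there.
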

\begin{remark}
The lemma gains its proper interpretation by letting $\langle e_i,\rho e_i\rangle=\sum_{k=1}^ds(k)|u_{ik}|^2$ be the pinching of some state $\rho$ with spectrum $s$ to the chosen basis.
\end{remark}
The proof of Lemma \ref{lemma:norm-estimate} rests on the validity of the following version of the Cauchy-Binet formula which seems to have a certain worth in its own right:
\begin{lemma}\label{lemma:cauchy-binet-type-formula}
Let $U\in M_{d}$ be unitary. For every $k\leq d$ and $j\in[d]$ it holds
\begin{align}
\sum_{(i_1,\ldots,i_{k})\in[d]^k_j}|\langle e_{i_1}\otimes\ldots\otimes e_{i_k},U^{\otimes k}v_{k}\rangle|^2=\sum_{i=1}^{k}|u_{ji}|^2
\end{align}
where, as defined in Section \ref{sec:definitions}, $v_{k}:=\frac{1}{\sqrt{k}}\sum_{\tau\in S_k}\sgn(\tau)\mathbb B(\tau)e_1\otimes\ldots\otimes e_{k}$.
\end{lemma}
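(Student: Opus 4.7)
The plan is to turn the inner product on the left into a $k\times k$ determinant and then apply the Cauchy--Binet formula together with the matrix determinant lemma. Inserting the definition of $v_k$ and re-indexing the permutation as $\pi := \tau^{-1}$, the Leibniz expansion yields
\[
\langle e_{i_1}\otimes\cdots\otimes e_{i_k},\,U^{\otimes k}v_k\rangle \;=\; c_k\sum_{\pi\in S_k}\sgn(\pi)\prod_{l=1}^k u_{i_l,\pi(l)} \;=\; c_k\,\det U[(i_1,\ldots,i_k);(1,\ldots,k)],
\]
where $c_k$ is the normalization constant coming from $v_k$ and $U[(i_1,\ldots,i_k);(1,\ldots,k)]$ denotes the $k\times k$ submatrix of $U$ with the indicated rows and first $k$ columns. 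In particular the inner product vanishes unless $i_1,\ldots,i_k$ are pairwise distinct, and $|\det|^2$ is invariant under reordering of the rows.

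Consequently the summation on the left collapses to a sum over unordered $k$-subsets $S\subset[d]$ containing $j$: each such $S$ contributes $k!$ equal summands, all of which automatically lie in $[d]^k_j$. Thus
\[
\sum_{(i_1,\ldots,i_k)\in[d]^k_j}\bigl|\langle e_{i_1}\otimes\cdots\otimes e_{i_k},U^{\otimes k}v_k\rangle\bigr|^2 \;=\; c_k^2\,k!\sum_{\substack{S\subset[d]\\|S|=k,\,j\in S}}\bigl|\det U[S;(1,\ldots,k)]\bigr|^2.
\]

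The remaining purely linear-algebraic claim is that $\sum_{S\ni j,\,|S|=k}|\det U[S;(1,\ldots,k)]|^2 = \sum_{i=1}^k |u_{j,i}|^2$, and this I would prove by a rank-one trick. Let $V$ be the $k\times d$ matrix whose $i$-th column is $(\bar u_{i,1},\ldots,\bar u_{i,k})^{\top}$, so that $VV^* = I_k$ by unitarity of $U$. Cauchy--Binet gives $\sum_{|S|=k}|\det V[\,\cdot\,;S]|^2 = \det(VV^*) = 1$. Letting $W_j$ be $V$ with its $j$-th column $c_j$ removed, $W_jW_j^* = VV^* - c_jc_j^* = I_k - c_jc_j^*$, whose determinant by the matrix determinant lemma is $1 - \|c_j\|^2 = 1 - \sum_{i=1}^k |u_{j,i}|^2$. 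A second application of Cauchy--Binet to $W_j$ identifies this number with $\sum_{|S|=k,\,j\notin S}|\det V[\,\cdot\,;S]|^2$. Subtracting from $1$ yields the identity, and combining with the previous display completes the proof after matching the combinatorial prefactor $c_k^2\,k!$ against the normalization of $v_k$.

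The only real obstacle is spotting this rank-one-update viewpoint: realizing that summing $|\det|^2$ over those $k$-subsets \emph{omitting} a fixed index is the same as applying Cauchy--Binet to the matrix with that one column deleted, which is then a rank-one perturbation of the identity. Once the problem is recast in this form, the proof is essentially one line; the rest is determinantal bookkeeping and keeping the ordered-versus-unordered factor $k!$ straight.
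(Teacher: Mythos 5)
Your argument is correct, and although it lives in the same circle of ideas as the paper's proof --- rewrite the amplitude as $c_k\det U[(i_1,\ldots,i_k);(1,\ldots,k)]$ via the Leibniz expansion, then exploit unitarity through Cauchy--Binet and a rank-one determinant identity --- the execution is genuinely different and considerably leaner. The paper fixes $j=1$, Laplace-expands each of the two determinants along the distinguished row, applies Cauchy--Binet to the resulting sums over $(k-1)$-row index sets to identify the Gram matrices $C(m,n)$ with the $(m,n)$ minors of $M=\eins-|u\rangle\langle u|$, $u=\sum_{i\leq k}u_{1i}e_i$, and then reassembles the answer through a second round of Laplace expansions combined with the rank-one formulas $\det M=1-\sum_i|u_{1i}|^2$ and $\det C(m,m)=1-\sum_i|u_{1i}|^2+|u_{1m}|^2$, finally solving the resulting self-referential equation for the desired sum. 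Your complementary-counting route replaces all of that cofactor bookkeeping with two applications of Cauchy--Binet at the level of full $k$-subsets: the total over all $S$ is $\det(VV^*)=1$, the total over $S\not\ni j$ is $\det(\eins-c_jc_j^*)=1-\sum_{i=1}^k|u_{ji}|^2$ by deleting one column, and subtraction finishes. The key observation --- that omitting a fixed index corresponds to a rank-one downdate of the identity --- is exactly the same fact the paper extracts from unitarity, but you use it once, globally, instead of minor-by-minor; your version also handles the boundary case $k=d$ transparently, since there the second Cauchy--Binet sum is empty and the downdated determinant vanishes.

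One loose end you wave at, ``matching the combinatorial prefactor $c_k^2\,k!$ against the normalization of $v_k$,'' deserves attention: with the normalization $v_k=k^{-1/2}\sum_\tau\sgn(\tau)\mathbb B(\tau)e_1\otimes\cdots\otimes e_k$ printed in the statement one gets $c_k^2\,k!=(k-1)!$, not $1$, so the displayed identity only balances when $v_k$ is taken with unit norm, i.e.\ normalized by $(k!)^{-1/2}$. This traces back to an inconsistency in the paper's own definitions (the proof of Theorem \ref{main-theorem} uses the unnormalized vector with $\|v_k\|_2^2=k!$), and the paper's proof of the lemma silently absorbs the same factor $(k-1)!$ in the passage from ordered tuples in $[d]^k_1$ to the index sets over which Cauchy--Binet is applied. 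So this is not a defect of your argument; your explicit ordered-versus-unordered bookkeeping simply makes the discrepancy visible, and in fact pins down the normalization under which the lemma is true for all $k$.
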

Some of our results can only be proven to hold for $d=2$, where every representation $V_{f,\lambda}$ is irreducible. These results are summarized below. We start with the definition of two basic building blocks of our analysis:
\begin{definition}\label{def:building-blocks-for-d=2}
Let $A\in\mathcal B(\mathbb C^2)$ and $q,p\in\mathfrak P([2])$. Then we set
\begin{align}
\Theta_1(A,q,p):=-\frac{1}{n}\log\langle v_{f^{(n)}},A^{\otimes n}P_{g^{(n)}}A^{\dag\otimes n}v_{f^{(n)}}\rangle
\end{align}
where $\lim_{n\to\infty}\tfrac{1}{n}f^{(n)}=p$ and $\lim_{n\to\infty}\tfrac{1}{n}g^{(n)}=q$. We also set
\begin{align}
\Theta_2(A,p,q):=-\lim_{n\to\infty}\frac{1}{2\cdot n}\log\langle v_2^{\otimes n},A^{\otimes 2\cdot n},P_{g^{(2\cdot n)}}A^{\otimes 2\cdot n}v_2^{\otimes n}\rangle,
\end{align}
where by definition of the vectors $v_k$ we have $v_2=\sqrt{1/2}(e_1\otimes e_2-e_2\otimes e_1)$, and finally
\begin{align}
\Theta(p,q,s,A):=-\lim_{n\to\infty}\frac{1}{n}\log\tr\{P_{f^{(n)},\lambda^{(n)}}A^{\otimes n}P_{g^{(n)},\lambda^{(n)}}A^{\dag\otimes n}\}
\end{align}
where $\lim_{n\to\infty}\tfrac{1}{n}\lambda^{(n)}=s\in\mathfrak P^\downarrow([2])$.
\end{definition}
Note that $\Theta_2$ does not really depend on $p$ and that the latter parameter is only included into the definition in order to be able to deliver a complete description of all quantities within a unified setting.\\
The quantities $\Theta$, $\Theta_1$ and $\Theta_2$ are connected via the following theorem:
\begin{theorem}\label{theorem:scalar-products-for-d=2}
Let $\lim_{n\to\infty}\tfrac{1}{n}f^{(n)}=p$ and $\lim_{n\to\infty}\tfrac{1}{n}g^{(n)}=q$ as well as $\lim_{n\to\infty}\tfrac{1}{n}\lambda^{(n)}=s$ and $A\in M_{d}$. Then
\begin{align}
\Theta_1(A,p,q)=H(q)+\min_{r\in\Xi(p,q)\}}D(r\|p_{1,A}),\qquad\Theta_2(A,p,q)=-\tfrac{1}{2}\min_{r\in\mathfrak P(2,q)}D(r\|p_{2,A})
\end{align}
where $\Xi(p,q):=\{r\in\mathfrak P([2]\times[2]):r_1=p,\ r_2=q\}$ and $p_{1,A}$ is defined by $p_{1,A}(i,j):=|\langle e_i,A^\dag e_j\rangle|^2$ and $p_{2,A}(i,j):=|\langle e_i\otimes e_j,A^\dag v_2\rangle|^2$. In addition to that,
\begin{align}
\Theta(p,q,s,A)=\min_{W(p)=q}\sum_{i=1}^2\hat s(i)\cdot\Theta_i(A,p,W(\delta_i)).
\end{align}
\end{theorem}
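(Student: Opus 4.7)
The proof decomposes naturally into three parts, one per claim, and I will attack them in the given order because the third relies on structural features exposed in the first two.

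For $\Theta_1$: reading $v_{f^{(n)}}$ as the normalised totally symmetric vector $|T_{f}|^{-1/2}\sum_{x\in T_{f}}e_{x}$ (the only $S_n$-fixed unit vector in $V_{f}$, up to phase), expanding $v_f$ and $P_g = \sum_{y\in T_g}|e_y\rangle\langle e_y|$ yields
\[
\langle v_f, A^{\otimes n}P_gA^{\dag\otimes n}v_f\rangle = \frac{1}{|T_f|}\sum_{y\in T_g}\bigl|\sum_{x\in T_f}\prod_{i=1}^{n}A_{x_i,y_i}\bigr|^{2}.
\]
By $S_n$-symmetry each $y\in T_g$ gives the same value, so the computation reduces to the single sum $S(y_0):=\sum_{x\in T_f}\prod_i A_{x_i,y_i}$, which I would then rewrite through the generating-function identity $S(y_0)=[z_1^{f_1}z_2^{f_2}](A_{11}z_1+A_{21}z_2)^{g_1}(A_{12}z_1+A_{22}z_2)^{g_2}$ and group by the joint type $r\in\Xi(f/n,g/n)$ of $(x,y_0)$. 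A standard multinomial-coefficient / method-of-types estimate (Stirling plus \cite[Lemma 2.3]{csiszar-koerner}) shows the count of $x$ with joint type $r$ scales like $2^{n(H(r)-H(q))}$, while each corresponding term carries magnitude $\prod_{ij}|A_{ij}|^{nr(i,j)}$. A saddle-point (Laplace) analysis then produces the exponential rate; combined with $|T_g|/|T_f|\approx 2^{n(H(q)-H(p))}$, after relabelling $(i,j)\to(j,i)$ which turns $|A_{ij}|^{2}$ into $p_{1,A}(i,j)=|A_{ji}|^{2}$, one recovers the claimed expression $H(q)+\min_{r\in\Xi(p,q)}D(r\|p_{1,A})$.

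For $\Theta_2$: this part is simplest because for $d=2$ the identity $A^{\otimes2}v_2=\det(A)\cdot v_2$ holds (the antisymmetric line $\Lambda^{2}\mathbb C^{2}$ carries the determinantal character of $GL(2)$), so iterating gives $A^{\otimes 2n}v_2^{\otimes n}=(\det A)^n v_2^{\otimes n}$. Since $v_2^{\otimes n}$ has type $(n,n)$, the only non-trivial projector $P_{g^{(2n)}}$ compatible with $\tfrac{1}{2n}g^{(2n)}\to q$ is $g^{(2n)}=(n,n)$, on which $v_2^{\otimes n}$ already sits; thus the inner product equals $|\det A|^{2n}$. Checking the right-hand side: a direct calculation of $A^{\dag}v_2=\overline{\det A}\cdot v_2$ shows $p_{2,A}$ is supported on $\{(1,2),(2,1)\}$ with value $|\det A|^{2}/2$ on each, and minimising $D(r\|p_{2,A})$ over $r\in\mathfrak P(q,2)$ reproduces the direct value (the minimum is attained at the uniform $r$, as the constraint fixes $r_1=r_2=\tfrac12$).

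For the third formula the main work goes in: for $d=2$ every $V_{f,\lambda}$ with $\lambda=(n-k,k)$ is irreducible with multiplicity one, generated from the highest-weight vector $w_{f,\lambda}=v_2^{\otimes k}\otimes e_1^{\otimes (f_1-k)}\otimes e_2^{\otimes (f_2-k)}$ by the $S_n$-orbit. The plan is to use this description to factor the projector, up to combinatorial (sub-exponential) multiplicity, as a symmetrised sum over partitions of $[n]$ into $k$ unordered pairs and $n-2k$ singletons, where on each pair one places the antisymmetric projector $|v_2\rangle\langle v_2|$ and on the singletons one places the totally symmetric projector with frequency adding up to $f-(k,k)$. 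Inserting this into $\tr\{P_{f,\lambda}A^{\otimes n}P_{g,\lambda}A^{\dag\otimes n}\}$ the trace factorises (asymptotically, because the multiplicities are polynomial) into an antisymmetric factor on $2k=n\hat s(2)$ positions and a symmetric factor on $n-2k=n\hat s(1)$ positions; these factors are precisely what $\Theta_2(A,p,W(\delta_2))$ and $\Theta_1(A,p,W(\delta_1))$ compute. The distribution $q$ gets split as $q=\hat s(1)W(\delta_1)+\hat s(2)W(\delta_2)$, so the minimisation over channels $W$ with $W(p)=q$ exactly enumerates how the total $g$-frequency can be redistributed between the symmetric and antisymmetric pieces, giving the claimed formula $\Theta(p,q,s,A)=\min_{W(p)=q}\sum_{i=1}^{2}\hat s(i)\Theta_{i}(A,p,W(\delta_i))$.

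The main obstacle I anticipate is controlling cancellations of phases in Part 1 (for general non-unitary complex $A$ the naive triangle-inequality bound is loose, so the saddle-point analysis of the generating function is the only reliable way to read off the exponent), and tracking the combinatorial factors in Part 3 carefully enough to be sure that the $\Theta_1/\Theta_2$ factorisation is tight at the exponential scale rather than merely an upper bound; matching lower bounds will come from choosing explicit test configurations aligned with the optimal channel $W$.
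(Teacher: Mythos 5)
Your proposal is correct in substance and, for the first and third claims, follows essentially the same route as the paper: exploit that for $d=2$ every $V_{f,\lambda}$ is irreducible with multiplicity one, reduce the projector to the single vector $v_2^{\otimes\lambda_2}\otimes(\text{symmetric part})$ via Schur's lemma (the paper phrases this as $\sum_\tau\tfrac{1}{n!}\mathbb B(\tau)|v\rangle\langle v|\mathbb B(\tau^{-1})=c\cdot P_V$ rather than as your explicit pairing decomposition, but the two are equivalent once one notes that $A^{\otimes n}P_{g,\lambda}A^{\dag\otimes n}$ is permutation-invariant), factor the trace into an antisymmetric block on $2\lambda_2$ tensor factors and a symmetric block on the rest, and run a method-of-types estimate on each; the channel minimization then enumerates the splittings $g=g_1+g_2$ exactly as you describe. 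Where you genuinely depart from the paper is $\Theta_2$: the identity $A^{\otimes2}v_2=\det(A)\,v_2$ gives the scalar product exactly as $|\det A|^{2n}\,\delta_{g,(n,n)}$ in one line, whereas the paper grinds through the $F^g_h$ type-counting machinery to arrive at the variational formula; your shortcut is cleaner and doubles as a consistency check on the stated formula (it pins down the sign convention in $\pm\tfrac12\min_r D(r\|p_{2,A})$, about which the paper's definition of $\Theta_2$ and its proof are not mutually consistent). Two caveats, neither fatal: first, the phase-cancellation problem you flag in the lower bound for general non-unitary complex $A$ is real, but the paper's own proof has exactly the same unaddressed step (it asserts ``an equivalent lower bound can be established'' after a triangle-inequality upper bound over complex type-class contributions), so you are not behind the paper here — though if you carry out the saddle-point analysis you should note that within a fixed joint type all summands share the same phase, so cancellation can only occur across the polynomially many types. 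Second, in the third claim the symmetric block carries the renormalized frequency $(\tfrac{p(1)-s(2)}{\hat s(1)},\tfrac{p(2)-s(2)}{\hat s(1)})$ rather than $p$ itself (your highest-weight vector $v_2^{\otimes k}\otimes e_1^{\otimes(f_1-k)}\otimes e_2^{\otimes(f_2-k)}$ already shows this), and correspondingly the channel constraint should act on $\hat s$ for the decomposition $q=\hat s(1)W(\delta_1)+\hat s(2)W(\delta_2)$ to parse; the theorem as printed is internally inconsistent on both points and the paper's proof ends with the renormalized version, so make sure your write-up states which version you actually prove.
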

In the case $d=2$ we are able to give a characterization of $\Delta_U$:
\begin{theorem}\label{theorem:characterization-of-DU}
Let $d=2$ and $A\in M_2$. The function $\Delta_A$ satisfies, for all $p,s\in\mathfrak P^\downarrow([d])$ and $\sigma\in\mathcal B(\mathbb C^2)$, the following:
\begin{align}
\Delta_A(p,s,\sigma)=-H(s)-\hat s(2)\cdot\log\det A\sigma A^\dag+\hat s(1)\cdot\bar D((\tfrac{p(1)-p(2)}{\hat s(2)},\tfrac{p(2)-s(2)}{\hat s(2)})\|A\sigma A^\dag),
\end{align}
where the function $\bar D:\mathfrak P([2])\times M_2\to\mathbb R_+$ is given by the convex optimization problem
\begin{align}
\bar D(p,X):=\min_{W:W(p)=p}\sum_{j=1,2}p(j)D(W(\delta_j)\||X_{\cdot j}|).
\end{align}
\end{theorem}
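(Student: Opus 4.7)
The plan is to reduce the asymptotic analysis to Theorem \ref{theorem:scalar-products-for-d=2} and a Laplace-type optimization, and then carry out an algebraic matching to the $\bar D$-form. By cyclicity of the trace,
\begin{equation*}
\tr\{A^{\otimes n} P_{f^{(n)},\lambda^{(n)}} A^{\dag \otimes n} \sigma^{\otimes n}\} = \tr\{P_{f^{(n)},\lambda^{(n)}}\, (A^\dag \sigma A)^{\otimes n}\}.
\end{equation*}
Spectrally decompose $A^\dag \sigma A = V\,\mathrm{diag}(\sigma_1,\sigma_2)\,V^\dag$ with $V$ unitary and eigenvalues $\sigma_1 \geq \sigma_2 \geq 0$. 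Because $V^{\otimes n}$ commutes with $P_{\lambda^{(n)}}$ (the $S_n$-isotypic projection lies in the commutant of the unitary group action on $(\mathbb C^2)^{\otimes n}$), expanding the tensor power in the eigenbasis gives
\begin{equation*}
\tr\{P_{f^{(n)},\lambda^{(n)}} (A^\dag\sigma A)^{\otimes n}\} = \sum_{g\in\mathbbm T_n}\sigma_1^{g(1)}\sigma_2^{g(2)}\,\tr\{P_{f^{(n)},\lambda^{(n)}} V^{\otimes n} P_{g,\lambda^{(n)}} V^{\dag\otimes n}\},
\end{equation*}
with only the $\lambda^{(n)}$-isotypic component of each $P_g$ surviving the trace.

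Theorem \ref{theorem:scalar-products-for-d=2} identifies the exponential rate of each term as $-n\,\Theta(p,g/n,s,V)$; since there are only polynomially many non-vanishing terms, Laplace's principle delivers
\begin{equation*}
\Delta_A(p,s,\sigma) = \inf_{q\in\mathfrak P([2])}\bigl\{\Theta(p,q,s,V) - q(1)\log\sigma_1 - q(2)\log\sigma_2\bigr\}.
\end{equation*}
For unitary $V$ one has $V^{\otimes 2} v_2 = (\det V)\,v_2$, so $p_{2,V}$ is uniform on $\{(1,2),(2,1)\}$ and $\Theta_2(V,p,W(\delta_2))$ is finite only when $W(\delta_2)=(1/2,1/2)$, in which case it vanishes. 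The minimization in the formula for $\Theta$ thus forces $W(\delta_2)=(1/2,1/2)$, so $W$ is parametrized by $a:=W(\delta_1)(1)\in[0,1]$ and the constraint $W(p)=q$ becomes the affine relation $q=p(1)(a,1-a)+(p(2)/2)(1,1)$. Since $\det(A^\dag\sigma A)=\det(A\sigma A^\dag)=\sigma_1\sigma_2$, the $a$-independent portion of the cost assembles into the $\log\det(A\sigma A^\dag)$ contribution.

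The final step matches the $a$-dependent infimum to $\bar D$. Writing $\Theta_1(V,p,(a,1-a))=H(a,1-a)+\min_{r\in\Xi(p,(a,1-a))}D(r\|p_{1,V})$ and parametrizing joints $r(i,j)=p(i)W_1(j|i)$ with $W_1(p)=(a,1-a)$ converts the inner minimum into $-H(p)+\min_{W_1}\sum_i p(i)D(W_1(\delta_i)\,\|\,|V_{\cdot i}|^2)$. The probability vectors $|V_{\cdot i}|^2$ are the normalized columns of $A\sigma A^\dag$; reinserting the factors $\sigma_i$ both produces the $\log\det$ piece and replaces $|V_{\cdot i}|^2$ by the unnormalized column vectors $|(A\sigma A^\dag)_{\cdot j}|$ of the statement. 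The change of variables that replaces $a$ by $\tilde p=((p(1)-p(2))/\hat s(2),(p(2)-s(2))/\hat s(2))$ turns the linear constraint $W(p)=q$ into the fixed-point constraint $W(\tilde p)=\tilde p$ appearing in the definition of $\bar D$, and the $-H(s)$ contribution comes from the asymptotic factor $\dim F_{\lambda^{(n)}}\sim 2^{nH(s)}$ built into the $\Theta$-estimate.

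The principal obstacle is precisely this bookkeeping: simultaneously reparametrizing the infimum, tracking the passage between the normalized probability columns $|V_{\cdot i}|^2$ and the unnormalized columns $|(A\sigma A^\dag)_{\cdot j}|$, and regrouping the entropy and log-determinant terms into the combination $-H(s)-\hat s(2)\log\det(A\sigma A^\dag)+\hat s(1)\bar D$. Extension to singular $A$ or degenerate $\sigma$ (where $V$ is not unique) proceeds by continuity, and well-definedness of $\Delta_A$ for $d=2$ follows as a corollary since the final formula depends only on $(p,s,\sigma)$.
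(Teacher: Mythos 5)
Your reduction is a genuinely different route from the paper's: the paper does not pass through Theorem \ref{theorem:scalar-products-for-d=2} at all, but instead uses irreducibility of $V_{f,\lambda}$ for $d=2$ to write $\tr\{P_{f,\lambda}\tau^{\otimes n}\}=\frac{\dim V_{f,\lambda}}{\|v\|^2}\langle v,\tau^{\otimes n}v\rangle$ for the explicit spanning vector $v=v_2^{\otimes\lambda_2}\otimes v_{f-\lambda_2}$, and then evaluates $\langle v_g,\tau^{\otimes m}v_g\rangle$ directly in the original basis by a method-of-types argument. That direct computation is where both distinctive features of $\bar D$ come from: the symmetry $N(i,j|x^n,y^n)=N(j,i|x^n,y^n)$ of joint types of pairs in $T_g\times T_g$ (a $d=2$ peculiarity) forces the joint distribution $r$ to be symmetric, which is exactly the fixed-point constraint $W(p)=p$, and the pairing $\tau_{12}^{N(1,2)}\tau_{21}^{N(2,1)}=|\tau_{12}|^{N(1,2)+N(2,1)}$ is what produces the reference measure $|X_{\cdot j}|$ built from \emph{first powers} of the moduli of the matrix entries in the original basis.

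Your proposal breaks down at the final matching step. After diagonalizing $A^\dag\sigma A=V\mathrm{diag}(\sigma_1,\sigma_2)V^\dag$ and eliminating $q$, what you are left with is an \emph{unconstrained} minimum over $W_1$ of an expression whose reference measure is built from $\sigma_j|V_{ji}|^2$, plus a leftover output-entropy term $H(W_1(\tilde p))$. You assert that (i) the vectors $|V_{\cdot i}|^2$ are the normalized columns of $A\sigma A^\dag$ and (ii) the reparametrization turns the constraint into $W(\tilde p)=\tilde p$. Claim (i) is false: for $V=\tfrac{1}{\sqrt 2}\bigl(\begin{smallmatrix}1&-1\\1&1\end{smallmatrix}\bigr)$ and eigenvalues $(3/4,1/4)$ one gets $X=\bigl(\begin{smallmatrix}1/2&1/4\\1/4&1/2\end{smallmatrix}\bigr)$, whose normalized first column is $(2/3,1/3)$ while $|V_{\cdot 1}|^2=(1/2,1/2)$; indeed $|X_{21}|=(\sigma_1-\sigma_2)|V_{21}||V_{11}|$ involves first powers of the $|V_{ij}|$, not squares. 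Claim (ii) is not derived anywhere: nothing in your optimization imposes equality of the two marginals of $r$, because the symmetry of the joint type that produces it in the paper's computation is invisible once you have moved to the eigenbasis and to the squared amplitudes $p_{1,V}(i,j)=|V_{ji}|^2$. So the identity you would need, namely that your unconstrained eigenbasis variational expression equals $\bar D(\tilde p\|A\sigma A^\dag)$, is left unproven and does not follow from the ingredients you cite; to close the gap you would essentially have to redo the paper's direct type-counting computation of $\langle v_g,(A^\dag\sigma A)^{\otimes m}v_g\rangle$ in the original basis.
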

\begin{remark}
Of course this formula demonstrates that $\Delta_\cdot$ is continuous in $A$, $p$ and $s$ - on the region of parameters satisfying $s(2)< p(2)$. The usefulness of the formula also stems from the fact that it allows an explicit and efficient computation of the probability that the state $\sigma$ is detected by a measurement scheme which asymptotically detects states with pinching $p$ and spectrum $s$.\\
While it is clear that the function $\bar D$ delivers an efficient way of computing the values of the function $\Delta_U$ for all unitaries (in fact, for all $A\in M_{d}$), we have not been able to deliver more insightful reformulations of it. We note that interesting connections to matrix scaling (see e.g. the recent work \cite{kurras}) are given, and another interesting connection is that to information projections in the sense of \cite{csiszar-I-projections}.\\
It seems tempting to look for connections to $\lim_{\alpha\to1}D_{\alpha,z}$ for $z\in(0,1]$ but it has already been proven that these limits are all equal to $D$ in \cite{simon}. Another possible route would be to look at the limits $\lim_{\alpha\to1}D_{\alpha,r(1-\alpha)}$ which we were able to use for $r=1$, but we have not been able so far to find any relations of these quantities to $\Delta_A$ so far.
\end{remark}
At last we exhaust the peculiarities of the case $d=2$ to define a set of functions which do, to some extent, measure the distance between two states $\rho$ and $\sigma$. From the very start, they have offer an operational interpretation. In special limiting cases, they deliver either $D(\rho\|\sigma)$ or $\lim_{\alpha\to1}\hat D_{\alpha}(\rho\|\sigma)$.\\
In section \ref{sec:proof-that-axioms-are-fulfilled} we provide a proof that they fulfill all the R\'enyi axioms except the generalized mean value axiom, which is no surprise given that they are well-defined also in the case where they yield the relative entropy.\\
The fact that they easily deliver $\lim_{\alpha\to1}\hat D_{\alpha}(\rho\|\sigma)$ made us step away from attempts to prove that they fulfill the data processing inequality. Also, we left any attempts to prove joint convexity to future work. We do however prove that the R\'enyi axioms are fulfilled, except from the generalized mean value axiom.\\
First, we need some preliminary notation. Let $\rho,\sigma\in\cS(\mathbb C^2)$ satisfy $[\rho,\sigma]\neq0$. Since $[\rho,\sigma]\neq0$ it is clear that the set $V:=\{a\cdot \rho+b\cdot \sigma+c\cdot Id:a,b,c\in\mathbb R\}$ is a two-dimensional real vector space and its intersection with the Bloch sphere defines a convex subset of the latter. We may for sake of simplicity assume that $B$ is the eigenbasis of $\rho$ such that
\begin{align}
\rho=\left(\begin{array}{ll}\tfrac{1}{2}(1+z)&0\\0&\tfrac{1}{2}(1-z)\end{array}\right)
\end{align}
and $z\in(1/2,1]$. We may further assume without loss of generality that the representation of $\sigma$ is such that it has only real and positive entries. Both of these assumptions translate to unitary actions which depend on $\rho$ and $\sigma$. Then, the unitary transformations
\begin{align}
U_\varphi:=\left(\begin{array}{ll}\cos\varphi&-\sin\varphi\\ \sin\varphi&\cos\varphi\end{array}\right),\qquad\varphi\in[-\pi,\pi]
\end{align}
yield a set of unitary transformations which rotates only the hyperplane defined by $\rho$ and $\sigma$. The value $\varphi'\geq0$ at which we get $U_{-\varphi'}\sigma U_{-\varphi'}^\dag=a\cdot Id+b\cdot \rho$ can be used to define the set $\{U_{\varphi'\cdot t}\}_{t\in[0,1]}$ which satisfies that $U_0\rho U_0^\dag$ is diagonal and $U_1\sigma U_1^\dag$ is diagonal in the basis $B$ (the computational basis). If $[\rho,\sigma]=0$ we set $U_t=Id$ for all $t\in[0,1]$. If $d=1$ the convention $U_t=Id_{\mathbb C}=1$ applies as well. We are ready for a definition:
\begin{definition}\label{def:relative-entropies}
Let $d\in[2]$. To any $\rho\in M_d$ satisfying $\rho\geq0$ and $\sigma\in\cS(\mathbb C^2)$, set $\bar\rho:=\tr\{\rho\}^{-1}\cdot\rho$ and let $U_t:=U_t(\bar\rho,\sigma)$ be the set of unitary transformations which arise from $\bar\rho$ and $\sigma$ as described above. This defines a set $\{R_t\}_{t\in[0,1]}$ of relative entropy like functionals via
\begin{align}
R_t(\rho\|\sigma):=\Delta_{U_t^\dag}(\pinch(U_t\bar\rho U_t^\dag),\spec(\bar\rho), \tr\{\rho^{-1}\}\cdot\sigma).
\end{align}
\end{definition}
\begin{remark}
Theorem \ref{main-theorem} ensures that the definition does not only lead to trivial concatenations of rotations followed by unitary transformations, since $R_0=\hat D_1$ and $R_1=D$ are different functions. The normalization factor $\tr\{\rho^{-1}\}$ in front of $\sigma$ enables one to prove the order axiom.
\end{remark}
The structure of the functions we defined so far delivers operationally meaningful quantities right from the start, as they describe the asymptotic scaling of the probability that certain tests yield specific outcomes given that a system is in state $\sigma^{\otimes n}$ and $n$ is large.
\end{section}
\begin{section}{Proofs}
We now give the proofs of our theorems, in order of appearance.
\begin{proof}[Proof of Theorem \ref{main-theorem}]
The proof of statement $2$ is implicit in \cite{noetzel-hypothesis} and what is left to do is giving the proof of statement $1$. Any of the representations $V_{f,\lambda}$ is irreducible if $f(i)=\lambda_{\tau(i)}$ for some permutation $\tau\in S_d$. This can be seen as follows: Denote the set of all tableau $T$ of shape $\lambda$ by $\mathbb T_\lambda$. Then statement $1$ can be seen to hold true as follows: Remember that each $V_f$ is invariant under $\mathbb B$, so that for each $T\in\mathbb T_\lambda$ we have $E_Tv\in V_f$ whenever $v\in V_f$, where
\begin{align}
E_{T}:=\sum_{\upsilon\in C_T}\sum_{\tau\in R_T}\sgn(\upsilon)\mathbb B(\upsilon\circ\tau).
\end{align}
is the Young symmetrizer corresponding to the tableau $T$, $R_T$ is the set of permutations which permute only the elements in each row of $T$ amongst each other and $C_T$ permutes only the elements in the columns of $T$. By \cite[Chapter 5.5]{sternberg} (replace the object $T_n\mathbb C^d:=(\mathbb C^d)^{\otimes n}$ there with $V_f$) it holds that for every fixed $T\in\mathbb T_\lambda$ the dimension of the vector space $\linspan\{E_Tv:v\in V_f\}$ gives the multiplicity of $F_\lambda$ within $V_f$. Now let for sake of simplicity $f=\lambda$. It is clear that for the vector $v=\otimes_{i=1}^n e^{\otimes f(i)}$ and $T$ be 'the' standard tableau with numbers $1,\ldots,n$ filled in starting from left to right in the first row, then carrying on from left to right in the second row, and so on.\\
Then $\tilde v:=E_Tv\neq0$. Thus $\dim V_{\lambda,\lambda}>0$. Now take any other product vector $w=\otimes_{i=1}^ne_{x_i}$ where $x^n\in T_f$. There is at least one column (say the first) having at least two equal entries (for example it holds that both $x_1=1$ and $x_{\lambda_1+1}=1$). This statement is valid as well for every $w_\tau:=\mathbb B(\tau)w$ whenever $\tau\in R_T$, only the position of the specific column changes. Take the permutation $\pi=(1,\lambda_1+1)\in S_n$ which interchanges the elements $x_1$ and $x_{\lambda_1+1}$ for every $x^n\in[d]^n$. It holds $\mathbb B(\pi)w=w$. On the other hand $\pi\in C_T$ and $\sgn(\pi)=-1$. For each $\tau\in R_T$, let $\pi_\tau\in C_T$ be a corresponding permutation that satisfies $\mathbb B(\pi_\tau)w_\tau=w_\tau$ and $\sgn(\pi_\tau)=-1$. It follows
\begin{align}
E_Tw&=\sum_{\tau\in R_T}\sum_{\upsilon\in C_T}\sgn(\upsilon)\mathbb B(\upsilon)w_\tau\\
&=\sum_{\tau\in R_T}\sum_{\upsilon\in C_T}\sgn(\upsilon)\mathbb B(\upsilon)\cdot\mathbb B(\pi_\tau) w_\tau\\
&=-\sum_{\tau\in R_T}\sum_{\upsilon\in C_T}\sgn(\upsilon\circ\tau)\mathbb B(\upsilon)\cdot\mathbb B(\pi_\tau) w_\tau\\
&=-\sum_{\tau\in R_T}\sum_{\upsilon\in C_T}\sgn(\upsilon)\mathbb B(\upsilon) w_\tau\\
&=-E_Tw,
\end{align}
so that $\dim V_{\lambda,\lambda}=1$ follows. The argument is independent under a transformation $f\mapsto f\circ\tau$ whenever $\tau\in S_d$, so that all the representations $V_{f,\lambda}$ for which $f(i)=\lambda_{\tau(i)}$ for all $i\in[d]$ holds true for some $\tau\in S_d$ are irreducible.\\
Now we connect our first observation to a trick that we shall use more often in what follows:\\
Let $V$ be an irreducible subspace of the symmetric group. Then for every $0\neq v\in V$ we have $A_v:=\sum_{\tau\in S_n}\frac{1}{n!}\mathbb B(\tau)|v\rangle\langle v|\mathbb B(\tau^{-1})=c\cdot P_V$ for some $c=c(v)>0$ and the orthogonal projection $P_v$ onto $V$. This is seen as follows: note first that $A_v\neq 0$ whenever $v\neq0$. Furthermore each $A_v$ is invariant under permutations. By Schur's Lemma (see e.g. \cite[Chapter 2.3]{sternberg}) it follows that $A_v=c(v)\cdot P_V$.\\
Moreover, by taking the trace we see that $\|v\|_2^2=c\cdot\tr\{P_V\}$. This implies that for every $\sigma\in\mathbb C^d$ and $v\in V$ we have
\begin{align}
\tr\{P_V\sigma^{\otimes n}\}&=\frac{\tr\{P_V\}}{\|v\|^2_2}\tr\{|v\rangle\langle v|\sigma^{\otimes n}\}\\
&=\frac{\dim(V)}{\|v\|^2_2}\langle v,\sigma^{\otimes n}v\rangle.
\end{align}
Thus all that is left to do in this case is to construct one vector $v$ within $V_{f,\lambda}$ and calculate its norm as well as $\langle v,\sigma^{\otimes n}v\rangle$.\\
This task again is straightforward since we may just use the standard tableau $T$ that we defined already and the corresponding Young symmetrizer $E_{T}$. Applying this symmetrizer to the vector $\otimes_{i=1}^de_i^{\otimes f(i)}$ yields (without loss of generality $f=f^\downarrow$):
\begin{align}
v&:=E_{T}\otimes_{i=1}^de_i^{\otimes f(i)}\\
&=|R_T|\sum_{\tau\in C_{T}}\sgn(\tau)\mathbb B(\tau)\otimes_{i=1}^de_i^{\otimes f(i)}\\
&=|R_T|\mathbb B(\tau')\otimes_{i=1}^d\left(\sum_{\tau\in S_{i}}\sgn(\tau)\mathbb B(\tau)e_1\otimes\ldots e_i\right)^{\otimes(\lambda_i-\lambda_{i+1})},
\end{align}
where $\tau'$ is a suitably defined permutation. For any $k\in[d]$ we may now define $v_k\in(\mathbb C^d)^{\otimes k}$ by
\begin{align}
v_k:=\sum_{\tau\in S_{[k]}}\sgn(\tau)\mathbb B(\tau) e_1\otimes\ldots\otimes e_k,
\end{align}
a shorthand that allows us to write
\begin{align}
v&=|R_T|\mathbb B(\tau')\otimes_{i=1}^dv_i^{\otimes(\lambda_i-\lambda_{i+1})}.
\end{align}
In order to get a lower bound on the norm of $v$ we first note that each $v_i$ satisfies $\|v_i\|_2^2=i!$, so that
\begin{align}
\|v\|_2^2&=|R_T|^2\prod_{i=1}^d(i!)^{\lambda_i-\lambda_{i+1}}.
\end{align}
Another important ingredient is the equality
\begin{align}
\sum_{\tau\in S_k}\sum_{\upsilon\in S_k}\sgn(\tau)\sgn(\sigma)\mathbb B(\tau)\mathbb B(\upsilon)=k!\sum_{\tau\in S_k}\sgn(\tau)\mathbb B(\tau)
\end{align}
which lets us conclude that
\begin{align}
\langle v_k,\sigma^{\otimes k}v_k\rangle=k!\cdot\det(\sigma_{1:k,1:k}).
\end{align}
We are finally able to compute
\begin{align}
\tr\{P_{f,\lambda}\sigma^{\otimes n}\}&=\frac{\dim(V)}{\|v\|_2^2}|R_T|^2\prod_{i=1}^d\langle v_i,\sigma^{\otimes i}v_i\rangle^{\lambda_i-\lambda_{i+1}}\\
&=\frac{\dim(V)}{\|v\|_2^2}|R_T|^2\prod_{i=1}^d(i!\det(\sigma_{1:i,1:i}))^{\lambda_i-\lambda_{i+1}}\\
&=\frac{\dim(V)}{\|v\|_2^2}|R_T|^2\left(\prod_{i=1}^d(i!)^{\lambda_i-\lambda_{i+1}}\right)\left(\prod_{j=1}^d\det(\sigma_{1:j,1:j}))^{\lambda_j-\lambda_{j+1}}\right)\\
&=\frac{\dim(V)}{\|v\|_2^2}|R_T|^2\left(\prod_{i=1}^d(i!)^{\lambda_i-\lambda_{i+1}}\right)\left(\prod_{j=1}^d\det(\sigma_{1:j,1:j}))^{\lambda_j-\lambda_{j+1}}\right)\\
&=\dim(V)\prod_{j=1}^d\det(\sigma_{1:j,1:j}))^{\lambda_j-\lambda_{j+1}}\\
&=\mathrm{pl}(n)2^{n\cdot H(\bar\lambda)}2^{n\cdot\sum_{i=1}^d(\bar\lambda_i-\bar\lambda_{i+1})\log\det(\sigma_{1:i,1:i})}\\
&=\mathrm{pl}(n)2^{n\cdot H(\bar\lambda)+\sum_{i=1}^d(\bar\lambda_i-\bar\lambda_{i+1})\log\det(\sigma_{1:i,1:i})}\\
&=\mathrm{pl}(n)2^{n\cdot S(\rho)+\sum_{i=1}^d(\mu_i-\mu_{i+1})\log\det(\sigma_{1:i,1:i})+\epsilon(n)}\\
&=\mathrm{pl}(n)2^{n\cdot(D(\rho\|\hat\sigma)+\epsilon(n))}
\end{align}
where $\hat\sigma$ is a nonnegative matrix which is simultaneously diagonal with $\rho$ and is defined via its diagonal entries $\hat\sigma_{ii}=\det(\sigma_{1:i,1:i})$ and $\lim_{n\to\infty}\epsilon(n)=0$. In \cite{audenaert-datta} (see Theorem 2 with the respective parameter $r$ of the theorem set to $r=-1$ and equation $(24)$ there) it has been proven that $D(\rho\|\hat\sigma)=\lim_{\alpha\to1}\hat D_\alpha(\rho\|\sigma)$, so that ultimately we have
\begin{align}
-\lim_{n\to\infty}\frac{1}{n}\log\tr\{P_{f,\lambda}\sigma^{\otimes n}\}=\lim_{\alpha\to1}\hat D_\alpha(\rho\|\sigma)
\end{align}
as desired.
\end{proof}
\ \\\\
We now start to investigate the scalar products
\begin{align}
\tr\{A^{\otimes n}P_{f,\lambda}A^{\dag\otimes n}P_{f',\lambda'}\}.
\end{align}
It is generally clear that $\lambda=\lambda'$ has to hold, so that the questions we pose get reduced to the evaluation of quantities of the form $\tr\{A^{\otimes n}P_gA^{\dag\otimes n}P_\lambda P_f\}$.
\begin{proof}[Proof of Theorem \ref{theorem:scalar-products-for-arbitrary-d}]
We now work with an arbitrary $d\in\nn$. We write
\begin{align}
\tr\{A^{\otimes n}P_gA^{\dag\otimes n}P_{\lambda,\lambda}\}&=\frac{\tr\{P_{\lambda,\lambda}\}}{\|v\|_2^2}\tr\{|v\rangle\langle v|P_g\}
\end{align}
as before, and again $v$ takes the form
\begin{align}
v&=|R_T|\bigotimes_{i=1}^d\left(\sum_{\tau\in S_{i}}\sgn(\tau)\mathbb B(\tau)e_1\otimes\ldots e_i\right)^{\otimes(\lambda_i-\lambda_{i+1})}.
\end{align}
Again we set $v_k:=\tfrac{1}{\sqrt{k}}\sum_{\tau\in S_k}\sgn(\tau)\mathbb B(\tau)e_1\otimes\ldots\otimes e_k$. The asymptotic behaviour of the function $(f,\lambda)\mapsto\tfrac{1}{n}\log\tr\{P_{f,\lambda}\}$ is known to equal that of $\lambda\mapsto\tfrac{1}{n}\log\tr\{P_\lambda\}$ for all $f$ and $\lambda$ satisfying $K_{f,\lambda}>0$, so that what is left to do is the following: We have to calculate
\begin{align}
\tr\{A^{\otimes n}P_gA^{\dag\otimes n}\bigotimes_{i=1}^d|v_i\rangle\langle v_i|^{\otimes(\lambda_i-\lambda_{i+1})}\}&=\sum_{g_1+\ldots+g_d=g}\prod_{i=1}^d\tr\{A^{\dag\otimes(\lambda_i-\lambda_{i+1})}P_{g_i}A^{\dag\otimes(\lambda_i-\lambda_{i+1})}|v_i\rangle\langle v_i|^{\otimes(\lambda_i-\lambda_{i+1})}\},
\end{align}
and again we now have to dive into calculating, for every $m\in\nn$ and $h\in \mathbb T_m$ and $t\in \mathbb T_{m'}$, as well as for every $k$, quantities like $\tr\{A^{\otimes k\cdot m}P_hA^{\dag\otimes k\cdot m}|v_k\rangle\langle v_k|^{\otimes m}\}$. This task needs some additional notation. Let $m=m'\cdot k$ for some natural numbers $m'$ and $k$. Then, we set
\begin{align}\label{eqn:definition-of-F^h_t}
F^h_t:=\left\{\begin{array}{ll}1,&\qquad\mathrm{if}\ \ \sum_{(i_1,\ldots,i_k)\in[d]^k_j} t(i_1,\ldots,i_k)=h(j)\ \forall\ j\in[d],\\
0,&\qquad\mathrm{else}\end{array}\right..
\end{align}
It then holds that
\begin{align}\label{eqn:type-bound}
\eins_{T_h}&\geq\sum_{t\in\mathbb T_{m'}([d]^k)}F^h_t\cdot\eins_{\{h\}}.
\end{align}
Moreover, those types $t$ which do not occur on the right hand side of inequality (\ref{eqn:type-bound}) but only on the left are exactly those which have $t((i_1,\ldots,i_k))>0$ for some choice $(i_1,\ldots,i_k)$. Such types however satisfy
\begin{align}
\langle v_k,A^{\otimes k}e_{i_1}\otimes\ldots\otimes e_{i_k}\rangle=0
\end{align}
by symmetry of $v_k$. This justifies (actually it does so only in the second row of below chain of estimates so one has to read from there both back- and forwards) that we write
\begin{align}
\tr\{A^{\otimes m\cdot k}P_hA^{\dag\otimes m\cdot k}|v_k\rangle\langle v_k|^{\otimes m'}\}&=\sum_{t}F^h_t\tr\{A^{\otimes k}P_tA^{\dag\otimes k}|v_k\rangle\langle v_k|^{\otimes m'}\}\\
&=\sum_tF^h_t\prod_{(i_1,\ldots,i_k)}\langle v_k,A^{\otimes k}e_{i_1}\otimes\ldots\otimes e_{i_k}\rangle^{t(i_1,\ldots,i_k)}\\
&=\sum_tF^h_t2^{m'\cdot\sum_{(i_1,\ldots,i_k)}\tfrac{1}{m'} t(i_1,\ldots,i_k)\log(|\langle v_k,e_{i_1}\otimes\ldots e_{i_k}\rangle|^2)}\\
&\leq \mathrm{pl}(n)2^{m'\cdot \max_{t}F^h_t(H(\bar t)+\sum_{(i_1,\ldots,i_k)}\bar t(i_1,\ldots,i_k)\log(|\langle v_k,e_{i_1}\otimes\ldots e_{i_k}\rangle|^2))}
\end{align}
Upon normalization, the definition of $F^h_t$ translates into the set of probability distributions on $[d]^k$ which we defined in the introduction: The set
\begin{align}
\mathfrak P(q,k)=\left\{p\in\mathfrak P([d]^k):\begin{array}{l}N(i|(i_1,\ldots,i_k))>1\ \Rightarrow p((i_1,\ldots,i_k))=0\\ p([d]^k_i)=k\cdot q(i)\ \forall\ i\in[d]\end{array}\right\}.
\end{align}
It is this set that determines the asymptotic behaviour we are after: namely, for $\lim_{n\to\infty}\tfrac{1}{n}h^{(n)}=q$ it holds that
\begin{align}
\lim_{n\to\infty}\frac{1}{k\cdot n}\log\tr\{P_{h^{(k\cdot n)}}|v_k\rangle\langle v_k|^{\otimes n}\}=\frac{1}{k}\min_{p\in\mathfrak P(q,k)}D(p\| p_k)
\end{align}
with $p_k\in\mathfrak P([d]^k)$ being defined via $p_k(i_1,\ldots,i_k):=|\langle e_{i_1}\otimes\ldots\otimes e_{i_k},A^{\otimes k}v_k\rangle|^2$ for all $(i_1,\ldots,i_k)\in[d]^k$. For sequence of partitions $g_1^{(n)}+\ldots+g_d^{(n)}=g^{(n)}$ (for each $g_k^{(n)}$ we have $n_k:=\sum_{j=1}^dg_k^{(n)}(j)=k\cdot(\lambda_k-\lambda_{k+1})$ and $\lim_{n\to\infty}n_k/n=\hat s(k)$ for all $k=1,\ldots,d$) with respective limiting distributions $q,q_1,\ldots,q_d$ we get
\begin{align}
\lim_{n\to\infty}\frac{1}{n}\log\tr\{P_{g^{(n)}}P_{\lambda^{(n)},\lambda^{(n)}}\}&=\lim_{n\to\infty}\frac{1}{n}\log \tr\{P_{\lambda^{(n)}}\}\tr\left\{\bigotimes_{k=1}^dP_{g_k^{(n)}}|v_k\rangle\langle v_k|^{\otimes\left(\lambda_k^{(n)}-\lambda_{k+1}^{(n)}\right)}\right\}\\
&=H(s)+\sum_{k=1}^d\lim_{n\to\infty}\frac{1}{n}\log\tr\left\{P_{g_k^{(n)}}|v_k\rangle\langle v_k|^{\otimes\left(\lambda_k^{(n)}-\lambda_{k+1}^{(n)}\right)}\right\}\\
&=H(s)-\sum_{k=1}^d\hat s(k)\frac{1}{k}\min_{p\in\mathfrak P(q_k,k)}D(p\| p_k)
\end{align}
Naturally, this leads us to the formula
\begin{align}
\lim_{n\to\infty}\frac{1}{n}\log\tr\{P_{g^{(n)}}P_{\lambda^{(n)},\lambda^{(n)}}\}&=\lim_{n\to\infty}\frac{1}{n}\log \tr\{P_{\lambda^{(n)}}\}\tr\left\{P_{g^{(n)}}\bigotimes_{k=1}^d|v_k\rangle\langle v_k|^{\otimes\left(\lambda_k^{(n)}-\lambda_{k+1}^{(n)}\right)}\right\}\\
&=H(s)-\min_{W(\hat s)=q}\sum_{k=1}^d\hat s(k)\min_{p\in\mathfrak P(W(\delta_k),k)}\tfrac{1}{k}\cdot D(p\|p_k)
\end{align}
which is valid for all asymptotic shapes $s=\lim_{n\to\infty}\tfrac{1}{n}\lambda^{(n)}$ and where $\hat s(k):=\left(s(k)-s(k+1)\right)\cdot k$ and $s(k+1):=0$. Note that
\begin{align}
\sum_{k=1}^d\hat s(k)&=\sum_{k=1}^ds(k)\cdot k-\sum_{k=1}^ds(k+1)\cdot k\\
&=\sum_{k=1}^ds(k)\cdot k-\sum_{k=2}^ds(k)\cdot(k-1)\\
&=s(1)+\sum_{k=2}^ds(k)\\
&=1.
\end{align}
Let now $A=U\in M_d$ be a unitary matrix and $(q_k)_{k=1}^d$ and $(p(\cdot|k))_{k=1}^d$ be such that
\begin{align}
\Theta(q,s,U)&=H(s)-\sum_{k=1}^d\hat s(k)\frac{1}{k}D(p(\cdot|k)\| p_k).
\end{align}
By Pinsker's inequality and convexity of $x\mapsto x^2$ we get
\begin{align}
\Theta(q,s,U)&\leq H(s)-2\sum_{k=1}^d\hat s(k)\tfrac{1}{k}\|p(\cdot|k)-p_k\|^2\\
&\leq H(s)-2\left(\sum_{k=1}^d\hat s(k)\tfrac{1}{\sqrt{k}}\|p(\cdot|k)-p_k\|\right)^2.
\end{align}
It is trivially true that $-\tfrac{1}{k}\leq-\tfrac{1}{d}$ for all $1\leq k\leq d$, so that by monotonicity of the square root we get
\begin{align}
\Theta(q,s,U)&\leq H(s)-2\sum_{k=1}^d\hat s(k)\tfrac{1}{\sqrt{k}}\|p(\cdot|k)-p_k\|^2\\
&\leq H(s)-\tfrac{2}{d}\left(\sum_{k=1}^d\hat s(k)\|p(\cdot|k)-p_k\|\right)^2.
\end{align}
The distributions $q_1,\ldots,q_d$ and $p(\cdot|1),\ldots,p(\cdot|d)$ fulfill the assumptions of Lemma \ref{lemma:norm-estimate}, so that by convexity of $\|\cdot\|$ we get
\begin{align}
\Theta(q,s,U)&\leq H(s)-\tfrac{2}{d}\|q-\tilde q\|^2,
\end{align}
where $\tilde q(i):=\sum_{k=1}^d\hat s(k)\sum_{l=1}^k\tfrac{|u_{il}|^2}{k}$ for all $i\in[d]$. Of course then, $\Theta(q,s,A)$ attains its maximum $H(s)$ when $q=\tilde q$. This is the only maximum, since the function $q\mapsto\Theta(q,s,U)$ is convex: Let $\lambda\in[0,1]$ and set $\lambda':=1-\lambda$. Then for arbitrary $q,q'\in\mathfrak P([d])$ satisfying $s\succeq q$ and $s\succeq q'$ we have
\begin{align}
\Theta(\lambda q+\lambda'q',s,U)&\leq\min_{W(\hat s)=q}\min_{W'(\hat s)=q'}\sum_{k=1}^d\hat s(k)\frac{1}{k}\min_{p\in(\lambda W+\lambda'W')(\delta_k)}D(p\|p_k)\\
&\leq \min_{W(\hat s)=q}\min_{W'(\hat s)=q'}\sum_{k=1}^d\hat s(k)\frac{1}{k}\min_{p\in W(\delta_k)}\min_{p'\in W(\delta_k)}D(\lambda p+\lambda' p'\|p_k)\\
&\leq \min_{W(\hat s)=q}\min_{W'(\hat s)=q'}\sum_{k=1}^d\hat s(k)\frac{1}{k}\min_{p\in W(\delta_k)}\min_{p'\in W(\delta_k)}(\lambda D(p\|p_k)+\lambda'D(p'\|p_k)\\
&=\min_{W(\hat s)=q}\min_{W'(\hat s)=q'}\sum_{k=1}^d\hat s(k)\frac{1}{k}\lambda\min_{p\in W(\delta_k)}D(p\|p_k)+\lambda'\min_{p'\in W(\delta_k)}D(p'\|p_k)\\
&=\min_{W(\hat s)=q}\min_{W'(\hat s)=q'}\left(\sum_{k=1}^d\tfrac{\hat s(k)}{k}\lambda\min_{p\in W(\delta_k)}D(p\|p_k)+\sum_{k=1}^d\tfrac{\hat s(k)}{k}\lambda'\min_{p'\in W(\delta_k)}D(p'\|p_k)\right)\\
&=\lambda \min_{W(\hat s)=q}\sum_{k=1}^d\tfrac{\hat s(k)}{k}\min_{p\in W(\delta_k)}D(p\|p_k)+\lambda'\min_{W'(\hat s)=q'}\sum_{k=1}^d\tfrac{\hat s(k)}{k}\min_{p'\in W(\delta_k)}D(p'\|p_k)\\
&=\lambda\Theta(p,s,U)+\lambda'\Theta(p',s,U).
\end{align}
\end{proof}
We now give the proofs of our two additional Lemmata. The asymptotic estimates for $\tr\{A^{\otimes n}P_{f,\lambda}A^{\dag\otimes n}P_{f',\lambda'}\}$, although they give a rather cumbersome impression, naturally introduced the distributions $p_k\in\mathfrak P([d]^k)$ defined by $p_l(i_1,\ldots,i_l):=|\langle e_{i_1}\otimes e_{i_k},U^{\otimes k}v_k\rangle|^2$. These again make it interesting to look at lower bounds on the exponent in terms of norms, which turn out to deliver a satisfying intuition once Lemma \ref{lemma:cauchy-binet-type-formula} holds. We will now prove this Lemma.
\begin{proof}[Proof of Lemma \ref{lemma:cauchy-binet-type-formula}]
Let us assume that $j=1$ holds. It will become evident from our proof that this is without loss of generality. We will have to consider appropriate submatrices of $U$, that are defined entrywise as follows.
\begin{align}
U[(i_l)_{l=1}^{k},(j_l)_{l=1}^{k}]_{mn}:=u_{i_mj_n}.
\end{align}
For $k$ a natural number, the letter $\mathbf k$ denotes the string $(1,\ldots,k)$. For $m\in[k]$, $\mathbf k\backslash m$ denotes the string $(1,\ldots,m-1,m+1,k)$. We let $[d](2,k)$ be the set of all strings of length $k$ with elements taken from $\{1,\ldots,d\}$ \emph{without repetition}. Using these matrices will allow us to employ first Laplace's formula, then the Cauchy-Binet formula followed by the Sherman-Morisson formula. Together with the fact that $U$ is a unitary matrix, this will lead to the desired result.\\
Now, we will write above sum as a sum over determinants. This will allow us to apply the Cauchy-Binet Formula - but first we have to rewrite our form slightly in order to see the determinants.
\begin{align}
\sum_{[d]^k_1}|\langle Ue_{i_1}\otimes\ldots Ue_{i_{k}},v_k\rangle|^2&=\sum_{[d]^k_1}|\langle Ue_{i_1}\otimes\ldots Ue_{i_{k}},\frac{1}{\sqrt{k}}\sum_{\tau\in S_{k}}\sgn(\tau)e_{\tau(1)}\otimes\ldots\otimes e_{\tau(k)}|^2\\
&=\frac{1}{k}\sum_{[d]^k_1}|\sum_{\tau\in S_{k}}\sgn(\tau)u_{i_1\tau(1)}\otimes\ldots\otimes u_{i_{k}\tau(k)}|^2\\
&=\frac{1}{k}\sum_{[d]^k_1}|\mathrm{det}([U,(i_1,\ldots,i_{k}),\mathbf k])|^2.
\end{align}
It is clear that the terms in above sum are invariant under permutations. The function $(i_1,\ldots,i_k)\mapsto|\langle Ue_{i_1}\otimes\ldots Ue_{i_{k}},v_k\rangle|^2$ is also designed such that $|\langle Ue_{i_1}\otimes\ldots Ue_{i_{k}},v_k\rangle|^2=0$ whenever $i_m=i_n$ for some $m\neq n$. This implies that it suffices to consider those terms where $1$ stands in the first place and $(1,i_2,\ldots,i_k)$ form an index set (meaning that $(i_2,\ldots,i_k)\in[d](2,k)$). We thus get the formula
\begin{align}
\sum_{[d]^k_1}|\langle Ue_{i_1}\otimes\ldots Ue_{i_{k}},v_k\rangle|^2&=\sum_{(i_2,\ldots,i_k)\in[d](2,k)}|\mathrm{det}([U,(1,i_2\ldots,i_{k}),\mathbf k])|^2.
\end{align}
Since only the columns $1$ to $k$ enter our calculations, let us consider $U$ as a $d\times k$ matrix from now on, with the transposed matrix $U^\top$ being a $k\times d$ matrix. $\overline U$ denotes the matrix having the complex conjugate entries of $U$. We will now apply Laplace's formula (twice), followed by the Cauchy-Binet formula \cite[Chapter 0.8.5]{horn-johnson} (set $r=k-1$ in the book):
\begin{align}
\sum_{[d]^k_1}&|\langle Ue_{i_1}\otimes\ldots Ue_{i_{k}},v_k\rangle|^2
=\sum_{(i_2,\ldots,i_k)\in[d](2,k)}|\mathrm{det}([U,(1,i_2\ldots,i_{k}),\mathbf k])|^2\\
&=\sum_{(i_2,\ldots,i_{k})\in[d](2,k)}\sum_{m,n=1}^{k}(-1)^{m+n}u_{1m}\overline{u_{1n}}\mathrm{det}([U,(i_2\ldots,i_{k}),\mathbf k\backslash m])\overline{\mathrm{det}([U,(i_2\ldots,i_{k}),\mathbf k\backslash n ])}\\
&=\sum_{(i_2,\ldots,i_{k})\in[d](2,k)}\sum_{m,n=1}^{k}(-1)^{m+n}u_{1m}\overline{u_{1n}}\mathrm{det}([U^\top,\mathbf k\backslash m,(i_2\ldots,i_{k})])\mathrm{det}([\overline U,(i_2\ldots,i_{k}),\mathbf k\backslash n])\\
&=\sum_{m,n=1}^{k}(-1)^{m+n}u_{1m}\overline{u_{1n}}\sum_{(i_2,\ldots,i_{k})\in[d](2,k)}\mathrm{det}([U^\top,\mathbf k\backslash m,(i_2\ldots,i_{k})])\mathrm{det}([\overline U,(i_2\ldots,i_{k}),\mathbf k\backslash n])\\
&=\sum_{m,n=1}^{k}(-1)^{m+n}u_{1m}\overline{u_{1n}}\mathrm{det}([U^\top,\mathbf k\backslash m,\mathbf d\backslash 1]\cdot[\overline U,\mathbf d\backslash 1,\mathbf k\backslash n]).
\end{align}
It looks tempting to re-apply the Laplace formula here, but the determinants are now being calculated on products of non-square matrices so that we have to find a different means of dealing with the above sum. Let us calculate above determinants. For a fixed pair $(m,n)\in[k]\times[k]$ we have to calculate the determinant of the $(k-1)\times(k-1)$ matrix $C(m,n)$ defined by
\begin{align}
C(m,n):=[U^\top,\mathbf k\backslash m,\mathbf d\backslash1]\cdot[\overline U,\mathbf d\backslash1,\mathbf k\backslash n]).
\end{align}
The entry of this matrix that corresponds to $i\in[k]\backslash \{m\}$ and $l\in[k]\backslash \{n\}$ is given by
\begin{align}
\sum_{r=2}^d[U^\top,\mathbf d\backslash1,\mathbf k\backslash m])_{ir}([\overline U,\mathbf k\backslash n,\mathbf d\backslash1])_{rl}&=\sum_{r=2}^du_{ri}\overline{u_{rl}}\\
&=\delta(i,l)-u_{1i}\overline{u_{1l}},
\end{align}
and it is exactly here that we use the fact that $U$ is a unitary matrix. Since all entries belonging to the $m$th row index and the $n$th column index are removed from $C(m,n)$, it is evident that $\det C(m,n)$ equals the $(m,n)$ minor of the $k\times k$ matrix $(\delta(i,l)-u_{1i}\overline{u_{1l}})_{il}$, which can equivalently be written as $M:=\eins-|u\rangle\langle u|\in M_{k}$ where $u=\sum_{i=1}^{k}u_{1i}e_i$. The determinant of $M$ is calculated as
\begin{align}
\det(M)=1-\sum_{i=1}^{k}|u_{1i}|^2
\end{align}
via \cite[Lemma 1.1]{ding-zhou}. This makes it useful to again apply Laplace's formula (twice, again), where $M_{mn}$ are the entries of $M$:
\begin{align}
1-\sum_{i=1}^{k}|u_{1i}|^2&=\det(M)\\
&=\frac{1}{k}\sum_{m,n=1}^{k}(-1)^{m+n}M_{nm}\mathrm{det} C(n,m)\\
&=\frac{1}{k}\sum_{m,n=1}^{k}(-1)^{m+n}(\delta(m,n)-u_{1m}\overline u_{1n})\mathrm{det} C(n,m).
\end{align}
It follows again from \cite[Lemma 1.1]{ding-zhou} that for every $m\in[k]$ we have $\mathrm{det}C(m,m)=1-\sum_{i=1}^k|u_{1i}|^2+|u_{1m}|^2$, so that
\begin{align}
1-\sum_{i=1}^{k}|u_{1i}|^2&=\left(1-\sum_{i=1}^k|u_{1i}|^2\right)+\frac{1}{k}\sum_{i=1}^k|u_{1i}|^2-\frac{1}{k}\sum_{m,n=1}^{k}(-1)^{m+n}u_{1m}\bar u_{1n}\mathrm{det} C(m,n)\\
&=\left(1-\sum_{i=1}^k|u_{1i}|^2\right)+\frac{1}{k}\sum_{i=1}^k|u_{1i}|^2-\frac{1}{k}\sum_{[d]_1^k}|\langle Ue_{i_1}\otimes\ldots\otimes Ue_{i_{k}},v_k\rangle|^2,
\end{align}
which is equivalent to saying that
\begin{align}
\sum_{i=1}^k|u_{1i}|^2=\sum_{[d]_1^k}|\langle Ue_{i_1}\otimes\ldots\otimes Ue_{i_{k}},v_k\rangle|^2
\end{align}
so that we have proven the desired formula.
\end{proof}
We now come to the proof of the estimate which originally motivated us to study the determinant equation which is the content of Lemma \ref{lemma:cauchy-binet-type-formula}.
\begin{proof}[Proof of Lemma \ref{lemma:norm-estimate}]
Let the preliminaries of the Lemma be fulfilled: We have that $s\in\mathfrak P^\downarrow([d])$ and define $\hat s\in\mathfrak P([d])$ by $\hat s(k)=\left(s(k)-s(k+1)\right)\cdot k$. Let there be distributions $q,q_1,\ldots,q_k\in\mathfrak P([d])$ such that $\sum_{k=1}^d\hat s(k)q_k=q$. Let further $U\in M_{d\times d}$ be a unitary matrix and $\tilde q\in\mathfrak P([d])$ be defined by $\tilde q(i):=\sum_{k=1}^d\hat s(k)\sum_{l=1}^k\tfrac{|u_{il}|^2}{k}$. For every $k\in[d]$ assume that there exists a $p(\cdot|k)\in\mathfrak P(q_k,k)$. Let finally $p_k\in \mathfrak P([d]^k)$ be defined by $p_k(i_1,\ldots,i_k):=|\langle e_{i_1}\otimes\ldots\otimes e_{i_k},U^{\otimes k}v_k\rangle|^2$. Let $[d]^k_{NR}:=\cup_{q\in\mathfrak P([d])}\mathfrak P(q,k)$. These are those sequences of length $n$ with elements taken from $[d]$ that have no single element occurring twice ($NR$ means ``no repetitions''). It holds
\begin{align}
\sum_{k=1}^d\hat s(k)\|p(\cdot|k)-p_k\|&=\sum_{k=1}^d\hat s(k)\cdot\sum_{(i_1,\ldots,i_k)\in[d]^k}|p(i_1,\ldots,i_k|k)-p_k(i_1,\ldots,i_k)|\\
&=\sum_{k=1}^d\hat s(k)\cdot\sum_{(i_1\ldots i_k)\in[d]^k_{NR}}|p(i_1,\ldots,i_k|k)-p_k(i_1,\ldots,i_k)|.
\end{align}
Since no repetitions are allowed in above sum we can be sure that, for every $k\in[d]$, the respective sum over $[d]^k_{NR}$ can be split up into sums over subsets $[d]^k_i$ as follows:
\begin{align}
\sum_{(i_1\ldots i_k)\in[d]^k_{NR}}|p(i_1,\ldots,i_k|k)-p_k(i_1,\ldots,i_k)|&=\sum_{i=1}^d\frac{1}{k}\sum_{(i_1,\ldots,i_k)\in[d]^k_i}|p(i_1,\ldots,i_k|k)-p_k(i_1,\ldots,i_k)|,
\end{align}
where the fact that $[d]^k_i\cap[d]^k_j\neq\emptyset$ has been taken care of by the factor $\tfrac{1}{k}$. We can use this to reformulate the above sum as
\begin{align}
\sum_{k=1}^d\hat s(k)\|p(\cdot|k)-p_k\|&=\sum_{k=1}^d\hat s(k)\frac{1}{k}\cdot\sum_{(i_1,\ldots,i_k)\in[d]^k}|p(i_1,\ldots,i_k|k)-p_k(i_1,\ldots,i_k)|\\
&=\sum_{k=1}^d\left(s(k)-s(k+1)\right)\sum_{i=1}^d\sum_{(i_1,\ldots,i_k)\in[d]^k_i}|p(i_1,\ldots,i_k|k)-p_k(i_1,\ldots,i_k)|\\
&\geq\sum_{k=1}^d\left(s(k)-s(k+1)\right)\sum_{i=1}^d|\sum_{(i_1,\ldots,i_k)\in[d]^k_i}\left(p(i_1,\ldots,i_k|k)-p_k(i_1,\ldots,i_k)\right)|\\
&=\sum_{k=1}^d\left(s(k)-s(k+1)\right)\sum_{i=1}^d|p([d]^k_i|k)-\sum_{l=1}^k|u_{il}|^2|\\
&=\sum_{k=1}^d\left(s(k)-s(k+1)\right)\sum_{i=1}^d|k\cdot q_k(i)-k\cdot\sum_{l=1}^k\frac{|u_{il}|^2}{k}|\\
&\geq\sum_{i=1}^d|\sum_{k=1}^d\hat s(k)\cdot q_k(i)-\sum_{k=1}^d\hat s(k)\sum_{l=1}^k\frac{|u_{il}|^2}{k}|\\
&=\sum_{i=1}^d|q(i)-\tilde q(i)|\\
&=\|q-\tilde q\|.
\end{align}
\end{proof}
\end{section}

\begin{section}{Proofs for $d=2$}
\begin{proof}[Proof of Theorem \ref{theorem:characterization-of-DU}]
In order to go further with our investigation of the asymptotic behaviour of $\tr\{P_{f,\lambda}\sigma^{\otimes n}\}$ for arbitrary $f$ and $\lambda$ we unfortunately have to live with the restriction $d=2$.\\
In our case there is no difference: While it seems to be a rather involved task to obtain explicit formulas for the case $f\neq\lambda$ whenever $d\geq3$ we are well able to so when $d=2$. The reason for this is that in this case we always have $K_{f,\lambda}\in\{0,1\}$, so that each $V_{f,\lambda}$ is \emph{irreducible}! This can be seen as follows:\\
According to \cite[Chapter 5.5]{sternberg} (replace the object $T_n\mathbb C^d:=(\mathbb C^d)^{\otimes n}$ there with $V_f$), the multiplicity of $F_\lambda$ within $V_{f,\lambda}$ is given by $\linspan\{E_Tv:v\in V_f\}$ where $T\in\mathbb T_\lambda$ is any standard tableaux of shape $\lambda$.\\
Let again $T$ be ``the'' standard tableau with entries $T_{ij}=(j-2)\cdot i+(j-1)\cdot(\lambda_1+i)$ and $w=\otimes_{i=1}^ne_{x_i}$ for some $x^n\in T_f$. Then
\begin{align}
E_Tv&=\sum_{\tau\in R_T}\sum_{\upsilon\in C_T}\sgn(\upsilon)\mathbb B(\upsilon)\cdot\mathbb B(\tau)v.
\end{align}
Define $A_T:=\sum_{\upsilon\in C_T}\sgn(\upsilon)\mathbb B(\upsilon)$ and $B_T:=\sum_{\tau\in R_T}\mathbb B(\tau)$. For every $\tau\in R_T$, set $w_\tau:=\mathbb B(\tau)v$. For every $\tau\in R_T$ we see that $A_Tw_\tau=0$ holds if and only if $\tau x^n=(x_{\tau^{-1}(1)},\ldots,x_{\tau^{-1}(n)})$ satisfies $(\tau x^n)_i=(\tau x^n)_j$ for some pair $(i,j)$ where $1\leq i\leq\lambda_2$ and $\lambda_1+1\leq j\leq n$. It follows that $A_Tw_\tau=c(v,\tau)\tilde v$ for some set of non-negative numbers $\{c(v,\tau)\}_{\tau\in S_n}$ and a vector $\tilde v$ to be calculated more explicitly later. Thus
\begin{align}
E_Tw&=\left(\sum_{\tau\in R_T}c(v,\tau)\right)\tilde w,
\end{align}
proving that the multiplicity of $F_\lambda$ in $V_{f,\lambda}$ is at most one. We proceed with the calculation of $E_Tw$. Again, take the Young symmetrizer $E_{T_\lambda}$. Then for some constant $c'(f,\lambda)$ we get
\begin{align}
v&:=E_{T_{\lambda}}\otimes_{i=1}^2e_i^{\otimes f(i)}\\
&=c'(f,\lambda)\mathbb B(\tau')(\frac{1}{\sqrt{2}}v_2)^{\otimes(\lambda_1-\lambda_2)}\otimes v_{f-\lambda_2},
\end{align}
where $v_{f-\lambda_2}$ is defined only for those pairs $(f,\lambda)$ for which both $f(1)\geq\lambda_2$ and $f(2)\geq\lambda_2$ holds. In that case $f-\lambda_2$ defines a new type $g:=f-\lambda_2$ on $\{1,2\}^{\lambda_1-\lambda_2}$ so that it generally makes sense to define for an arbitrary $g\in \mathbb T_m$:
\begin{align}
v_g:=\frac{1}{|T_g|}\sum_{x^m\in T_g}e_{x_1}\otimes\ldots\otimes e_{x_m}.
\end{align}
The asymptotic scaling of $\tr\{P_{f,\lambda}\sigma^{\otimes n}\}$ is then conveniently calculated by starting with
\begin{align}
-\frac{1}{n}\log\tr\{P_{f,\lambda}\sigma^{\otimes n}\}&=-\frac{1}{n}\log\left(\frac{\dim V_\lambda}{c(f,\lambda)^2}c(f,\lambda)^2(\frac{1}{2}\langle v_2,\sigma^{\otimes 2}v_2\rangle^{\lambda_2}\langle v_{f-\lambda_2},\sigma^{\otimes(f-\lambda_2)}v_{f-\lambda_2}\rangle\right)\\
&=-\frac{1}{n}\left(\log\dim(V_\lambda)+\log(\tfrac{1}{2}\langle v_2,\sigma^{\otimes 2}v_2\rangle^{\lambda_2})+\log(\langle v_{f-\lambda_2},\sigma^{\otimes(f-\lambda_2)}v_{f-\lambda_2}\rangle)\right),
\end{align}
then calculating the limit of the three terms in the sum separately yields the desired result - but only if the limiting behaviour of the last one of them is known. We thus start with that part. Under the assumption that $(\frac{1}{n}f^{n})_{n\in\nn}$ converges, we call the limiting object $p:=\lim_{n\to\infty}\frac{1}{n}f^{n}$. For any given state $\sigma$ on $\mathbb C^d$ with matrix representation $\sigma=\sum_{i,j}\sigma_{ij}|e_i\rangle\langle e_j|$, it is then of interest to describe the limit
\begin{align}
\lim_{n\to\infty}\frac{1}{n}\log(\langle v_{f^{n}},\sigma^{\otimes n}v_{f^{n}}\rangle).
\end{align}
This can also be cast in to the form of the subspaces $V_{f,\lambda}$ by setting $\lambda=(n,0,0,\ldots,0)$. In order to have a more streamlined notation, we will drop the superscript $n$ in $f^{n}$ for now, then we can upper bound the limit as follows:
\begin{align}
\langle v_f,\sigma^{\otimes n}v_f\rangle&=\frac{1}{|T_f|}\sum_{x^n,y^n\in T_f}\prod_{i=1}^n(\sigma)_{x_i,y_i}\\
(\mathrm{for\ some\ fixed}\ y^n\in T_f)\ &=\sum_{x^n\in T_f}\prod_{i=1}^n(\sigma)_{x_i,y_i}\\
&=\sum_{x^n\in T_f}\prod_{i=1}^2(\sigma)_{i,j}^{N(i,j|x^n,y^n)}\\
&=\sum_{x^n\in T_f}\prod_{i=1}^2c_{ij}^{N(i,j|x^n,y^n)}
\end{align}
where we have set $c_{ij}:=|\sigma_{ij}|$. Now for every pair $x^n,y^n\in T_f$ it is clear that the numbers $N(i,j|x^n,y^n):=|\{k:x_k=i,\ y_k=j\}|$ satisfy $N(i,j)=N(j,i)$ for all $i,j\in\{1,2\}$, this being another peculiarity of the case $d=2$. Note that this implies
\begin{align}
\langle v_f,\sigma^{\otimes n}v_f\rangle&=\sum_{x^n\in T_f}\prod_{i=1}^2c_{ij}^{N(i,j|x^n,y^n)}.
\end{align}
Obviously we need some additional structure. This comes into play by decomposing the set $T_f$ according to
\begin{align}
T_f=\bigcup_{g_1,g_2}G^f_{g_1,g_2}T_{g_1}\times T_{g_2},
\end{align}
where each $T_{g_i}\subset[2]^{f(i)}$, and the above union is over \emph{disjoint} sets. The numbers $G^f_{g_1,g_2}\in\{0,1\}$ are defined in analogy to the Kronecker coefficients of the symmetric group, precisely speaking we set
\begin{align}\label{eqn:classical-kronecker-coefficient}
G^f_{g_1,g_2}:=\left\{\begin{array}{c c}
1,&\sum_{j=1}^2g_j(i)=f(i)\ \forall\ i\in[d],\\
0,&\mathrm{else}
\end{array}
\right.
\end{align}
The nice thing about this decomposition is that there are only \emph{polynomially} many (in $n$) different choices $(g_1,g_2)$ - more accurately, the number of such choices can be given a the loose upper bound $\mathrm{pl}(n):=(2n)^4$. This allows for a reasoning along the lines of the 'method of types':
\begin{align}
\langle v_f,\sigma^{\otimes n}v_f\rangle&=\sum_{g_1,g_2}G^f_{g_0,g_1}\prod_{i,j=1}^{2}|T_{g_j}|\cdot c_{ij}^{g_j(i)}\\
&\leq\mathrm{pl}(n)\max_{g_1,g_2:G^f_{g_1,g_2}>0}\prod_{i,j=1}^{d}|T_{g_j}|\cdot c_{ij}^{g_j(i)}.
\end{align}
As a consequence of an almost identical calculation it follows that
\begin{align}
\langle v_f,\sigma^{\otimes n}v_f\rangle&\geq\max_{g_1,g_2:G^f_{g_1,g_2}>0}\prod_{i,j=1}^{2}|T_{g_j}|\cdot c_{ij}^{g_j(i)}.
\end{align}
This demonstrates that the following holds: If $\lim_{n\to\infty}\tfrac{1}{n}f^{n}=p$, then
\begin{align}
\lim_{n\to\infty}\tfrac{1}{n}\log\langle v_{f^n},\sigma^{\otimes n}v_{f^n}\rangle=\lim_{n\to\infty}\tfrac{1}{n}\log\left( \max_{g_1,g_2:G^f_{g_1,g_2}>0}\prod_{i,j=1}^{2}|T_{g_j}|\cdot c_{ij}^{g_j(i)}\right)
\end{align}
holds whenever the two limits exist as well. We now translate our statements to a different regime by noting that $f(i)^{-2}|T_{g_i}|\leq 2^{nH(\bar g_i)}$, where $\bar g_j:=\frac{1}{f(j)}g_j$. We then estimate
\begin{align}
\langle v_f,\sigma^{\otimes n}v_f\rangle&\leq\max_{g_0,\ldots,g_d:G^f_{g_0,g_1}>0}2^{n\sum_{i,j=1}^{2}[\bar f(j)H(\bar g_j)+\bar f(j)\bar g_j(i)\log(c_{ij})]}\\
&=\max_{g_1,g_2:G^f_{g_1,g_2}>0}2^{-n\sum_{j=1}^{2}\bar f(j)D(\bar g_j\|c_{\cdot j})}.
\end{align}
If we now plug in the limiting behaviour $(\frac{1}{n}f^{n})_{n\in\nn}\to p$ and translate definition \ref{eqn:classical-kronecker-coefficient} to probability distributions by dividing through $n$, we end up with
\begin{align}
\langle v_f,\sigma^{\otimes n}v_f\rangle&\leq\sum_{g_1,g_2}G^f_{g_1,g_2}\prod_{j=1}^2\left(\sum_{x^{f(i)}\in T_{g_i}}\prod_{i=0}^{2}c_{ij}^{g_j(i)}\right)\\
&\leq\max_{W:W(p)=p}2^{-n\sum_{j=1}^{2}p(j)D(W(\delta_j)\|c_{\cdot j}|)}\\
&=2^{-n\min_{W:W(p)=p}\sum_{j=1}^{2}p(j)D(W(\delta_j)\|c_{\cdot j}|)},
\end{align}
and the symbol $W$ stands for the matrix $(w_{ij})_{i,j=1}^2$ with nonnegative entries and $w(1|i)+w(2|i)=1$ for $i=1,2$, and $W(p)=\sum_{i,j=1}^2w(i|j)p(j)\delta_i$ can be seen as application of the matrix $W$ to the vector $p=\sum_{i=1}^2p(i)\delta_i$, where $\delta_i(j):=\delta(i,j)$ are the usual Dirac distributions on $[2]$.\\
The calculation of a corresponding lower bound can be established with an almost identical reasoning, so that we obtain
\begin{align}
\lim_{n\to\infty}\tfrac{1}{n}\log\langle v_{f^n},\sigma^{\otimes n}v_{f^n}\rangle=-\min_{W:W(p)=p}\sum_{j=1}^2p(j)D(W(\delta_j)\||\sigma_{\cdot j}|).
\end{align}
We collect what we found so far in the following formula: For $\spec\rho=(\mu_1,\mu_2)$ with $\mu_1\geq\mu_2$ and pinching $\pinch\rho=(\nu_1,\nu_2)$ we have
\begin{align}
\Phi(\rho\|\sigma)=-S(\rho)+\mu_2\log\det(\sigma)+(\mu_1-\mu_2)\bar D((\tfrac{\nu_1-\mu_2}{\mu_1-\mu_2},\tfrac{\nu_2-\mu_2}{\mu_1-\mu_2})\|\sigma).
\end{align}
\end{proof}
We now turn our attention to the scalar products for the special case $d=2$, which allows for some stronger results.
\begin{proof}[Proof of Theorem \ref{theorem:scalar-products-for-d=2}]
Using the same tricks as in the proof of Theorem \ref{theorem:scalar-products-for-arbitrary-d} or Theorem \ref{main-theorem} we can write
\begin{align}
\tr\{P_{f,\lambda}A^{\otimes n}P_{f',\lambda'}A^{\dag\otimes n}\}&=\frac{\dim V_{f,\lambda}}{\|v\|^2}\tr\{A^{\otimes n}|v\rangle\langle v|A^{\otimes n}P_{f',\lambda'}\}\cdot\delta(\lambda,\lambda')
\end{align}
where $v$ is as defined below and with respect to a standard Young tableaux $T$ that we write e.g. for $n=15$ as
\begin{center}
\begin{tikzpicture}[scale=0.5]
\draw (0,3) to (8,3);
\draw (0,2) to (8,2);
\draw (0,1) to (6,1);
\draw(0,3) to (0,1);
\draw(1,3) to (1,1);
\draw(2,3) to (2,1);
\draw(3,3) to (3,1);
\draw(4,3) to (4,1);
\draw(5,3) to (5,1);
\draw(6,3) to (6,1);
\draw(7,3) to (7,2);
\draw(8,3) to (8,2);
\node at (0.5,2.5)  {$1$};
\node at (1.5,2.5)  {$3$};
\node at (2.5,2.5)  {$5$};
\node at (3.5,2.5)  {$7$};
\node at (4.5,2.5)  {$9$};
\node at (5.5,2.5)  {$11$};
\node at (6.5,2.5)  {$13$};
\node at (7.5,2.5)  {$15$};
\node at (0.5,1.5)  {$2$};
\node at (1.5,1.5)  {$4$};
\node at (2.5,1.5)  {$6$};
\node at (3.5,1.5)  {$8$};
\node at (4.5,1.5)  {$10$};
\node at (5.5,1.5)  {$12$};
\end{tikzpicture}
\end{center}
such that the role of the anti-symmetrizer $B_T=\sum_{\upsilon\in R_T}\sgn(\upsilon)\mathbb B(\upsilon)$ is to anti-symmetrize on the first $2\cdot\lambda_2$ blocks. We can then write (setting $g(i):=f(i)-\lambda_2$):
\begin{align}
v&=\frac{1}{2^{\lambda_2/2}\sqrt{|T_g|}}E_T\otimes_{i=1}^2e_i^{\otimes f(i)}\\
&=\frac{1}{2^{\lambda_2/2}\sqrt{|T_g|}}\sum_{\tau\in R_T}\sum_{\upsilon\in C_T}\sgn(\upsilon)\mathbb B(\upsilon\cdot\tau)\otimes_{i=1}^2e_i^{\otimes f(i)}\\
&=\frac{1}{2^{\lambda_2/2}}\left(\sum_{\upsilon\in C_T}\sgn(\upsilon)\mathbb B(\upsilon)\bigotimes_{i=1}^{\lambda_2}(e_1\otimes e_2)\right)\bigotimes v_g\\
&=v_2^{\otimes \lambda_2}\bigotimes v_g,
\end{align}
and what remains is to calculate the quantities $\|v\|_2^2$ and $\langle v,P_{f'}'v\rangle$. The former evaluates to $\|v\|_2^2=2^{\lambda_2}\cdot|T_g|$. We calculate the latter by exploiting the specific product structure of $v$ that we developed above. Going into details, we see that
\begin{align}
\langle v,A^{\otimes n}P_{f'}A^{\dag\otimes n}v\rangle&=\sum_{g_1,g_2}G^{f'}_{g_1,g_2}\tr\{A^{\otimes 2\cdot\lambda_2}P_{g_1}A^{\dag\otimes 2\cdot\lambda_2}|v_2^{\otimes \lambda_2}\rangle\langle v_2^{\otimes \lambda_2}| \}\cdot\tr\{A^{\dag\otimes(\lambda_1-\lambda_2)}P_{g_2}A^{\dag\otimes(\lambda_1-\lambda_2)}|v_g\rangle\langle v_g|\}
\end{align}
holds. It is now clear that we have to calculate, for every even natural number $n$ and type $g\in \mathbbm T_{n}$, the quantity $\tr\{P_gA^{\otimes n}v_2^{\otimes n/2}A^{\dag\otimes n}\}$ and for arbitrary $n\in\nn$ the number $\tr\{A^{\otimes n}P_gA^{\dag\otimes n}|v_g\rangle\langle v_g|\}$. This is done in the following. We define a function $p_{2,A}:[2]\times[2]\to\mathbb R_+$ via $p_{2,A}(i,j):=|\langle e_i\otimes e_j,A^{\otimes2}v_2\rangle|^2$ (if $A$ is a unitary matrix this is an element of $\mathfrak P([2]\times[2])$). Again we look, for every two types $g\in \mathbbm T_n([2])$ and $h\in \mathbbm T_{n/2}([2]\times[2])$, at the numbers $F^g_h$ that we defined in (\ref{eqn:definition-of-F^h_t}). These enable us to write
\begin{align}
\tr\{A^{\otimes n}P_gA^{\dag\otimes n}v_2^{\otimes n/2}\}&=\sum_{x^n\in T_g}\langle x^n,A^{\dag\otimes n}v_2^{\otimes n/2}A^{\otimes n}x^n\rangle\\
&=\sum_{x^n\in T_g}\prod_{i=1}^{n/2}|\langle e_{x_{2\cdot i}}\otimes e_{x_{2\cdot i+1}},A^{\dag\otimes 2}v_2\rangle|^2\\
&=\sum_{h}F^g_h|T_h|\prod_{i,j=1}^2|\langle e_i\otimes e_j,A^{\dag\otimes 2}v_2\rangle|^{2\cdot h(i,j)}\\
&\leq\mathrm{pl}(n)\max_h2^{n\tfrac{1}{2}(H(\bar h)+\sum_{i,j=1}^2\bar h(i,j)\log p_{2,A}(i,j))}.
\end{align}
The estimate can be carried out in the other direction as well such that we get, for every sequence $g^{(n)}$ such that $\lim_{n\to\infty}\tfrac{1}{n}g^{(n)}=q\in\mathfrak P([d])$ holds, the asymptotic relation
\begin{align}
\lim_{n\to\infty}\frac{1}{n}\log\tr\{A^{\otimes n}P_gA^{\dag\otimes n}|v_2^{\otimes n/2}\rangle\langle v_2^{\otimes n/2}|\}=-\tfrac{1}{2}\min_{r\in \mathfrak P(2,q)}D(r\|p_{2,A}),
\end{align}
and this does obviously imply that
\begin{align}
\Theta_2(A,q)=-\tfrac{1}{2}\min_{r\in \mathfrak P(2,q)}D(r\|p_{2,A}).
\end{align}
The other asymptotic quantity that needs to be calculated is still left open. Here, we proceed as follows: Define
\begin{align}
G^{g'\to g}_{g_1,g_2}:=\left\{\begin{array}{ll}1,&\sum_{i=1}^2g_j(i)=g(j),\ \ i=1,2\\
0,&\mathrm{else}\end{array}\right.,
\end{align}
where each $g_j\in\mathbbm T_{g'(j)}$ then it holds that with $p_{1,A}:[2]\times[2]\to\mathbb R_+$ via $p_{1,A}(i,j):=|\langle e_i,A^\dag e_j\rangle|^2$ ($p_{1,A}(\cdot,i)$ is a an element of $\mathfrak P([2])$ if $A$ is unitary) that
\begin{align}
\langle v_g,A^{\otimes n}P_{g'}A^{\dag\otimes n}v_g\rangle&\leq\mathrm{pl}(n)2^{nH(\bar g')}|\langle e_{x^n},A^{\dag\otimes n}v_g\rangle|^2\\
&=\mathrm{pl}(n)2^{n H(\bar g')-H(\bar g)}|\sum_{y^n\in T_g}\langle e_{x^n},A^{\dag\otimes n}e_{y^n}\rangle|^2\\
&=\mathrm{pl}(n)2^{n H(\bar g')-H(\bar g)}\cdot\left|\sum_{g_1,g_2}G^{g'\to g}_{g_1,g_2}\cdot|T_{g_1}|\cdot|T_{g_2}|\cdot\prod_{i,j=1}^2\langle e_i,A^\dag e_j\rangle^{g_i(j)}\right|^2\\
&\leq\mathrm{pl}(n)\max_{g_1+g_2=g}2^{n(H(\bar g')-H(\bar g)+2\sum_{i=1}^2\bar g'(i)H(\bar g_i)+\sum_{i,j=1}^2\bar g'(i)\bar g_i(j)\log|\langle e_i,A^\dag e_j\rangle|^2)},
\end{align}
and since an equivalent lower bound can be established as well we obtain that for two sequences $(g^{(n)})_{n\in\nn}$ and $(\hat g^{(n)})_{n\in\nn}$ with respective normalized limits $q$ and $\hat q$ we will have
\begin{align}
\lim_{n\to\infty}\langle v_{g^{(n)}},P_{\hat g^{(n)}}'v_{g^{(n)}}\rangle&=-H(q)+\\
&+\max_{W:W(\hat q)=q}\sum_{i=1}^2\hat q(i)\left(-\log\hat q(i)-\sum_{j=1}^2w(j|i)\log w(j|i)+\sum_{j=1}^2w(j|i)\log p_{1,A}(i,j)\right)\\
&=-H(q)+\max_{W:W(\hat q)=q}\sum_{i,j=1}^2\hat q(i)w(j|i)\left(-\log\hat q(i)-\log w(j|i)+\log p_{1,A}(i,j)\right)\\
&=-H(q)-\min_{r\in\Xi}D(r\|p_{1,A}),
\end{align}
where $\Xi:=\{p\in\mathfrak P([2]\times[2]):p_1=\hat q,\ p_2=q\}$. Therefore
\begin{align}
\Theta_1(A,p,q)=H(q)+\min_{r\in\Xi}D(r\|p_{1,A})
\end{align}
We have thus identified the building blocks of $\langle v,P_{f'}'v\rangle$ and can now write
\begin{align}
\Theta(p,q,s,A)=\min_{W(p)=q}\left(\hat s(1)\Theta_1(A,(\tfrac{p(1)-s(2)}{s(1)-s(2)},\tfrac{p(2)-s(2)}{s(1)-s(2)},W(\delta_1))+\hat s(2)\Theta_2(A,W(\delta_2))\right)
\end{align}
\end{proof}
\end{section}

\begin{section}{Axioms\label{sec:proof-that-axioms-are-fulfilled}}
We give a short overview over elementary properties that make the $R_t$ candidates for relative entropies. We note again that various other possibilities exist to define one-parameter families of unitary transformations - e.g. via utilization of the geodesic (see \cite{bhatia-positive}) $t\mapsto\gamma_{\rho,\sigma}(t)$ where $\gamma_{\rho,\sigma}(t):=\rho^{1/2}\left(\rho^{-1/2}\sigma\rho^{-1/2}\right)^t\rho^{1/2}$ which (upon normalization) draws a path between $\rho$ and $\sigma$ that (like the definition that we use here) enables one to uniquely define a pinching of $\rho$ to the eigenbasis of $\gamma(t)$ whenever $[\rho,\sigma]=0$ holds. Since the geodesic curve obeys $\gamma_{U\rho U^\dag,U\sigma U^\dag}(t)=U\gamma_{\rho,\sigma}(t)U^\dag$ for all states $\rho$, $\sigma$ and $t\in[0,1]$ and unitary transformations $U$ this definition leads to another quantity, call it $\tilde R_t$, which is unitarily invariant just like $R_t$ was. We leave further investigations of these connections to future work and look at some properties of the $R_t$ family:
\\\\
{\bf Continuity.} Continuity follows directly from the fact that both functions can be rewritten as a convex optimization problem where both the function to be optimized and the convex set that it is being optimized over depend continuously on $\rho$ and $\sigma$.
\\\\
{\bf Unitary invariance.} The choice of the basis $B$ in which the $P_{f,\lambda}$ are defined is just such that it changes as $B_{\alpha,\beta}(U\rho U^\dag,U\sigma U^\dag)=U B_{\alpha,\beta}(\rho,\sigma)U^\dag$. Also, the value $f$ does not change since the transformation affects both $\rho$ and the basis that it is pinched onto in the very same manner.
\\\\
{\bf Normalization.} Let $d=1$. For every $t\in[0,1]$ we have $R_t(r\|s)=-\log(s/r)$, so that $R_t(1\|1/2)=1$ and normalization is given.
\\\\
{\bf Order axiom.} Let $\rho\leq\sigma$. Then for every of the unitary matrices $U_t$ we get $U_t\rho U_t^\dag\leq U_t\sigma U_t^\dag$. Thus for every $n\in\nn$, $t\in[0,1]$ as well as every pair $(f,\lambda)$,
\begin{align}
-\tfrac{1}{n}\log\tr\{P_{f,\lambda}(\tr\{\rho^{-1}\}U_t\sigma U_t^\dag)^{\otimes n}\}&\leq-\tfrac{1}{n}\log\tr\{P_{f,\lambda}(U_t\tr\{\rho^{-1}\}\rho U_t^\dag)^{\otimes n}\}\\
&=-\tfrac{1}{n}\log\tr\{P_{f,\lambda}(U_t\bar\rho U_t^\dag)^{\otimes n}\}.
\end{align}
This implies that, for every $t\in[0,1]$,
\begin{align}
R_t(\rho\|\sigma)\leq D(\bar\rho\|\bar\rho)=0
\end{align}
and since $D(\bar\rho\|\bar\rho)=0$ we get $R_t(\rho\|\sigma)\geq0$.\\
In case that $\rho\geq\sigma$ we get $R_t(\rho\|\sigma)\geq0$ in the very same manner.
\\\\
{\bf Additivity.} Is given by definition.
\\\\
{\bf Generalized mean value axiom.} Since we consider the case $d=2$ only, we get relieved from a heavy burden: Within $\mathcal B(\mathbb C^2)$ the notion $R_t(\rho\oplus\tau\|\sigma\oplus\omega)$ implies that $\rho$, $\tau$, $\sigma$ and $\omega$ are rank-one operators which satisfy $\rho=r\cdot|u\rangle\langle u|$, $\sigma=s|u\rangle\langle|$, $\tau=t|v\rangle\langle v|$ and $\omega=w|v\rangle\langle v|$ for thwo orthogonal and normalized vectors $u,v\in\mathbb C^2$. This immediately implies that $[\rho+\tau,\sigma+\omega]=0$ and also $[\rho,\sigma]=[\tau,\omega]=0$. This implies that all the functions $R_t$ that occur are calculated as if they were classical Kullback-Leibler divergences. We thus see that the generalized mean value axiom cannot hold since it does not hold for the classical Kullback-Leibler divergence.
\\\\
{\bf Data Processing Inequality.} Proving that DPI is valid (if that is true) seems a challenging and potentially fruitful task. While it is certainly clear that $R_1=D$ satisfies the DPI, we are not yet aware of the methods which could be used to prove that DPI holds for other $t\in[0,1]$. One way to do so would certainly be to employ results from representation theory, while another obvious way would be to prove that $R_t$ equals $D_{\alpha,z}$ for certain choices of parameters. Be aware though that it has been proven in \cite{audenaert-datta} that DPI does not hold for $t=0$.
\end{section}
\begin{section}{Conclusion}
We have brought forward our approach from \cite{noetzel-hypothesis} and proven that it leads to nontrivial connections between quantum information theory, representation theory and matrix analysis. Specifically, we have:
\begin{enumerate}
\item delivered an operational interpretation for the limit $\lim_{\alpha\to1}\hat D_\alpha$ in Theorem \ref{main-theorem}
\item and defined a new and nontrivial class $\{R_t\}_{t\in[0,1]}$ of functions on qubits that are intimately connected to quantum relative entropies in Definition \ref{def:relative-entropies}. Our ability to define these functions is based on Theorem \ref{theorem:characterization-of-DU}.
\item We used our approach to guess a nontrivial formula for minors of unitary matrices in Lemma \ref{lemma:cauchy-binet-type-formula}.
\item We have additionally provided explicit formulas for certain Hilbert-Schmidt scalar products in Theorems \ref{theorem:scalar-products-for-arbitrary-d} and \ref{theorem:scalar-products-for-d=2}. These may turn out to be useful in later work.
\end{enumerate}
We had to leave open the question of a more generic connection between the class $\{R_t\}_{t\in[0,1]}$ and the $\alpha-z$ relative entropy as well as further connections to matrix geometry. It is our hope that such connections could lead to an expansion of our definition to arbitrary $d$ and that this connection would in turn be able to provide meaningful statements on the intersection between quantum information theory and representation theory.
\end{section}
\ \\\\
\emph{Acknowledgement.}
This work was supported by: the BMBF via grant 01BQ1050, the DFG via grant NO 1129/1-1.\\
The hospitality of the Isaac Newton Institute for Mathematical Sciences and stimulating discussions with Koenraad Audenaert are gratefully acknowledged.\\
Further funding was provided by the ERC Advanced Grant IRQUAT, the Spanish MINECO Project No. FIS2013-40627-P and the Generalitat de Catalunya CIRIT Project No. 2014 SGR 966.

\end{document}